% \newcommand{\GF}[1]{\ensuremath{\mathtext{GF}\left(#1\right)}}
% 
% \newcommand{\complexityclassname}[1]{\ensuremath{\mathrm{#1}}}
% \newcommand{\complementclass}[1]{\complexityclassname{co}#1}
% \newcommand{\SigPH}[1]{\Sigma^p_{#1}}
% \newcommand{\CONP}{\complementclass{\NP}}
% \newcommand{\NP}{\complexityclassname{NP}}
% \newcommand{\PTIME}{\complexityclassname{P}}
% \newcommand{\redlogm}{\ensuremath{\leq_{m}^{\log}}}
% \newcommand{\var}[1]{\ensuremath{\mathrm{VAR}\!\left(#1\right)}}
% \newcommand{\set}[1]{\ensuremath\left\{#1\right\}}
% \newcommand{\prblemname}[1]{\ensuremath{\mathsf{#1}}}
% \newcommand{\mathtext}[1]{\ensuremath{\mathrm{\text{#1}}}}
% \newcommand{\sat}[1]{\prblemname{SAT}\!\left(#1\right)}
% \newcommand{\decisionproblem}[3]{
% \medskip
% \vspace*{1mm}
% \begin{tabular}{ll}
% \textit{Problem:} & #1 \\
% \textit{Input:} & #2 \\
% \textit{Question:} & #3
% \end{tabular}
% \smallskip
% \vspace*{1mm}
% }
% 
% \newcounter{theorem}
% \newtheorem{theorem}{Theorem}[section]
% \newtheorem{corollary}[theorem]{Corollary}
% \newtheorem{example}[theorem]{Example}
% \newtheorem{lemma}[theorem]{Lemma}
% \newtheorem{proposition}[theorem]{Propositione}
% \newenvironment{definition}{\paragraph{Definition}}{\ \\ \bigskip\linebreak}
% \newcommand\qedsymbol{\ensuremath{\Box}}
% \def\endproof{{\unskip\nobreak\hfil\penalty50
% 	      \hskip1em\hbox{}\nobreak\qedsymbol
% 	      \parfillskip=0pt\par\endtrivlist\addpenalty{-100}}}
% \def\proof{\trivlist
%  \item[\hskip \labelsep{\it Proof.}]}

\documentclass{article}

\hyphenation{Reith}
\hyphenation{Hema-spaan-dra}

\usepackage{times}
\usepackage{algorithmic}
\usepackage{amsmath}
\usepackage{amssymb}

\newcommand{\MEE}[1]{\ensuremath{\prblemname{MEE}}\left(#1\right)}
\newcommand{\lMEE}[2]{\ensuremath{\prblemname{MEE}^{#1}_{#2}}}
\newcommand{\lSIZE}[1]{\ensuremath{\mathop{size}_{#1}}}
\newcommand{\pair}[2]{\ensuremath{\langle#1,#2\rangle}}

\newcommand{\GF}[1]{\ensuremath{\mathtext{GF}\left(#1\right)}}

\newcommand{\clone}[1]{\ensuremath{\left[ #1\right]}}
\newcommand{\clonename}[1]{\mathrm{#1}}

\newcommand{\cV}{\ensuremath{\clonename{V}}}
\newcommand{\cE}{\ensuremath{\clonename{E}}}
\newcommand{\cL}{\ensuremath{\clonename{L}}}

\newcommand{\dual}[1]{\mathrm{dual}\!\left(#1\right)}

\newcommand{\complexityclassname}[1]{\ensuremath{\mathrm{#1}}}
\newcommand{\complementclass}[1]{\complexityclassname{co}#1}
\newcommand{\SigPH}[1]{\Sigma^p_{#1}}
\newcommand{\CONP}{\complementclass{\NP}}
\newcommand{\NP}{\complexityclassname{NP}}
\newcommand{\PTIME}{\complexityclassname{P}}
\newcommand{\redlogm}{\ensuremath{\leq_{m}^{\log}}}
\newcommand{\redpm}{\ensuremath{\leq_{m}^{p}}}
\newcommand{\eqlogm}{\ensuremath{\equiv_{m}^{\log}}}
\newcommand{\var}[1]{\ensuremath{\mathrm{VAR}\!\left(#1\right)}}
\newcommand{\set}[1]{\ensuremath\left\{#1\right\}}
\newcommand{\prblemname}[1]{\ensuremath{\mathsf{#1}}}
\newcommand{\mathtext}[1]{\ensuremath{\mathrm{\text{#1}}}}
\newcommand{\sat}[1]{\prblemname{SAT}\!\left(#1\right)}
\newcommand{\nmodels}{\ensuremath{\not\models}}
\newcommand{\card}[1]{\left| #1 \right|}
\newcommand{\decisionproblem}[3]{
\medskip
\vspace*{1mm}
\begin{tabular}{ll}
\textit{Problem:} & #1 \\
\textit{Input:} & #2 \\
\textit{Question:} & #3
\end{tabular}
\smallskip
\vspace*{1mm}
}

\newcounter{theorem}
\newtheorem{theorem}{Theorem}[section]
\newtheorem{corollary}[theorem]{Corollary}
\newtheorem{example}[theorem]{Example}
\newtheorem{lemma}[theorem]{Lemma}
\newtheorem{proposition}[theorem]{Proposition}
\newenvironment{definition}{\paragraph{Definition}}{\ \\}

\newcounter{prooffactcounter}
\newtheorem{prooffact}[prooffactcounter]{Fact}

\title{Minimization for Generalized Boolean Formulas
\thanks{Supported by the NFS, grants CCR-0311021 and IIS-0713061, the DAAD postdoc program, and by a Friedrich Wilhelm Bessel Research Award. Work done in part while
the second author worked at the Rochester Institute of Technology. This is the full version of~\cite{Hemaspaandra-Schnoor-Minimization-IJCAI-2011-TOAPPEAR}.}
}

\author{Edith Hemaspaandra and Henning Schnoor}
\date{}

\author{Edith Hemaspaandra \\
  \multicolumn{1}{p{.7\textwidth}}{\centering\emph{Rochester Institute of Technology, \\ Rochester, NY, USA}} \ \\ \ \\ Henning Schnoor \\
  \multicolumn{1}{p{.7\textwidth}}{\centering\emph{Christian-Albrechts-Universit\"at zu Kiel, \\ Kiel, Germany}}}

\begin{document}
\maketitle
\sloppy

\begin{abstract}
The minimization problem for propositional formulas is an important optimization problem in the second level of the polynomial hierarchy. In general, the problem is $\SigPH 2$-complete under Turing reductions, but restricted versions are tractable. We study the complexity of minimization for formulas in two established frameworks for restricted propositional logic: The Post framework allowing arbitrarily nested formulas over a set of Boolean connectors, and the constraint setting, allowing generalizations of CNF formulas. In the Post case, we obtain a dichotomy result: Minimization is solvable in polynomial time or \CONP-hard. This result also applies to Boolean circuits. For CNF formulas, we obtain new minimization algorithms for a large class of formulas, and give strong evidence that we have covered all polynomial-time cases.
\end{abstract}

\section{Introduction}

The minimization problem for propositional formulas is one of the most natural optimization problems in the polynomial hierarchy. In fact, a variant of this problem was a major motivation for the definition of the polynomial hierarchy \cite{mest72}. The goal of minimization is to find a minimum equivalent formula to a given input formula. In this paper, we study the \emph{minimum equivalent expression (MEE)} problem, where the input is a formula $\varphi$ and a number $k$, and the question is to determine whether there exists a formula which is equivalent to $\varphi$ and of size at most $k$ (we study different notions of ``size'').

The problem is trivially in $\SigPH 2$, but a better lower bound than \CONP-hardness had been open for many years. In \cite{hewe02}, Hemaspaandra and Wechsung proved the problem to be (many-one) hard for parallel access to \NP. Recently, it was shown to be $\SigPH 2$-complete under Turing reductions by Buchfuhrer and Umans~\cite{bu-um-11:minimization}.

Minimization in restricted fragments of propositional logic has been studied for the case of Horn formulas in order to find small representations of knowledge bases~\cite{Hammer-Kogan-Horn-Minimization}. Prime implicates, a central tool for minimizing Boolean formulas~\cite{Quine-52}, have been used in several areas of artificial intelligence research. We mention~\cite{DBLP:journals/jair/AdjimanCGRS06}, where prime implicates were used in peer-to-peer data management systems for the semantic web, and~\cite{DBLP:journals/logcom/Bittencourt08}, which applies them in the context of belief change operators. Two-level logic minimization is an important problem in logic synthesis~\cite{umans06}. Different variants of minimization have been studied: The problem is $\SigPH 2$-complete for CNF formulas~\cite{uma01}, \NP-complete for Horn formulas~\cite{BC94}, and solvable in \PTIME\ for 2CNF formulas~\cite{chang}.

In this paper we study the complexity of minimization for syntactically restricted formulas. Two frameworks for restricting the expressive power of propositional logic have been used for complexity classifications in recent years:

\begin{itemize}
 \item The \emph{Post framework}~\cite{pos41} considers formulas that instead of the usual operators $\wedge$, $\vee$, and $\neg$, use an arbitrary set $B$ of Boolean functions as connectors. Depending on $B$, the resulting formulas may express only a subset of all Boolean functions, or may be able to express all functions more succinctly than the usual set $\set{\wedge,\vee,\neg}$.
 \item The \emph{constraint framework}~\cite{sch78} studies formulas in CNF form, where the types of allowed clauses (e.g., Horn, 3CNF, or XOR clauses) are defined in a \emph{constraint language} $\Gamma$ containing ``templates'' of generalized CNF-clauses that are allowed in so-called $\Gamma$-formulas.
\end{itemize}

In both frameworks, a wide range of complexity classifications has been obtained. For the Post framework, we mention the complexity of satisfiability~\cite{lew79}, equivalence~\cite{rei01}, modal satisfiability~\cite{HemaspaandraSchnoorSchnoor-GeneralizedModal-JCSS-2010}, and non-monotonic logics~\cite{DBLP:journals/corr/abs-1009-1990}. In the constraint setting, besides the satisfiability problem~\cite{sch78,abi+09}, also enumeration of solutions~\cite{crhe97}, equivalence and isomorphism~\cite{boherevo02,boherevo04}, circumscription~\cite{nordh-johnsson:2004}, and unique satisfiability~\cite{ju99} have been studied, see~\cite{crvo08} for a survey. The complexity of satisfiability for non-Boolean domains is also a very active field, see e.g.,~\cite{DBLP:journals/jacm/Bulatov06,DBLP:conf/dagstuhl/BulatovV08}.

For many considered problems, ``dichotomy results'' were achieved, proving that every choice of $B$ or $\Gamma$ leads to one of the same two complexity degrees, usually polynomial-time solvable and \NP-complete. This is surprising since there are infinitely many sets $B$ and $\Gamma$, and we know that there are, for example, infinitely many degrees of complexity between \PTIME\ and \NP\ cases unless $\PTIME=\NP$~\cite{lad75b}.

A ``Galois Connection'' between constraint languages and closure properties in the Post setting determines the complexity of many computational problems~\cite{jecogy97,SchnoorSchnoor-Galois-LNCS-2008}. In contrast, we show that these tools do not apply to minimization.

In the Post setting, we obtain a complete classification of the tractable cases of the minimization problem: For a set $B$ of Boolean functions, the problem to minimize $B$-formulas is solvable in polynomial time or \CONP-hard, hence avoiding the degrees between \PTIME\ and \CONP-completeness. Our results in this framework apply to both the formula and the circuit case, and to different notions of size of formulas and circuits. 

In the constraint case, we define \emph{irreducible} constraint languages, among which we identify a large class whose formulas can be minimized in polynomial time, and prove \NP- or \CONP-hardness results for most of the remaining cases. More precisely, we prove the following: For an irreducible constraint language for which equivalence can be tested in polynomial time, the minimization problem is \NP-complete if the language can express (dual) positive Horn, and can be solved in polynomial time otherwise. \NP-completeness for the positive Horn case was shown in~\cite{BC94}. Our analysis thus shows that previous hardness results about the hardness of minimizing positive Horn formulas were ``optimal:'' As soon as a CNF fragment of propositional logic is strictly less expressive than positive Horn, formulas can be minimized efficiently. Since irreducibility is a natural condition for constraint languages that are used in knowledge representation, a consequence of our result is that knowledge bases that do not need the full expressive power of positive Horn admit efficient ``compression algorithms.''

Our contribution is threefold: 
\begin{enumerate}
 \item We give new and non-trivial minimization algorithms for large classes of formulas.
 \item In the Post setting, we prove that all remaining cases are \CONP-hard. In the constraint setting, we give strong evidence that larger classes do not have efficient minimization algorithms. 
 \item We show that minimization behaves very differently than many other problems in the context of propositional formulas: The usually-applied algebraic tools for the constraint setting cannot be applied to minimization. Also, complexities in the Post- and constraint framework differ strongly: In particular, the constraint framework contains \NP-complete cases; such cases do not exist in the Post framework (unless $\NP=\CONP$).
\end{enumerate}

\section{Minimization in the Post Framework}

We fix a finite set $B$ of Boolean functions of finite arity. We define \emph{$B$-formulas} inductively: A variable $x$ is a $B$-formula, and if $\varphi_1,\dots,\varphi_n$ are $B$-formulas, and $f$ is an $n$-ary function from $B$, then $f(\varphi_1,\dots,\varphi_n)$ is a $B$-formula. We often identify the function $f$ and the symbol representing it. $\var\varphi$ denotes the set of variables in a formula $\varphi$. We write $\varphi(x_1,\dots,x_n)$ to indicate that $\var\varphi=\set{x_1,\dots,x_n}$. For an assignment $\alpha\colon\var\varphi\rightarrow\set{0,1}$, the \emph{value of $\varphi$ for $\alpha$}, $\varphi(\alpha)$, is defined in the straightforward way. We write $\alpha\models\varphi$ if $\varphi(\alpha)=1$, and say that $\alpha$ \emph{satisfies} $\varphi$. Formulas $\varphi_1$ and $\varphi_2$ are \emph{equivalent} if $\varphi_1(\alpha)=\varphi_2(\alpha)$ for all $\alpha$, we then write $\varphi_1\equiv\varphi_2$. The satisfiability problem for $B$-formulas, i.e., the problem to decide whether a given $B$-formula has at least one solution, is denoted with $\sat{B}$.

Formulas can be succinctly represented as \emph{circuits}, which are essentially DAGs where formulas are trees. Although every circuit can be rewritten into a formula, the size of the resulting formula can be exponential in the size of the circuit.

In the Post framework, we study two variations of the minimization problem that differ in the notion of the \emph{size} of a formula $\varphi$. An obvious way to measure size is the number of occurrences of literals, which we denote with $\lSIZE l(\varphi)$. The second measurement is motivated by the study of Boolean circuits, where the size of a circuit is usually the number of non-input gates. For a formula, this is the number of appearing function symbols. We denote this number with $\lSIZE s(\varphi)$. Our results also hold for obvious variations of these measures (e.g., counting variables instead of occurrences, also counting input gates, etc). For a set $B$ as above, we define:

\decisionproblem{$\lMEE{F/C}{l/s}(B)$}{A $B$-formula/circuit $\phi$ and a natural number $k$}{Is there a $B$-formula/circuit $\psi$ with $\lSIZE{l/s}(\psi)\leq k$ and $\phi\equiv\psi$?}

For an $n$-ary Boolean function $f$, the function $\dual f$ is defined as $\dual f(x_1,\dots,x_n)=\overline{f(\overline{x_1},\dots,\overline{x_n})}$, i.e., $\dual f$ is the function obtained from $f$ by exchanging the roles of the values $0$ and $1$ in the evaluation of $f$. Since the minimization problem is trivially invariant under this transformation, we obtain the following result (as usual, for a set $B$ of Boolean functions, with $\dual B$ we denote the set $\set{\dual f\ \vert\ f\in B}$):

\begin{proposition}\label{prop:lewis duality for mee}
 Let $B$ be a finite set of Boolean functions, then $\lMEE{F/C}{l/s}(B)\eqlogm\lMEE{F/C}{l/s}(\dual B)$ 
\end{proposition}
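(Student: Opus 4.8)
The plan is to establish a log-space many-one reduction in each direction between $\lMEE{F/C}{l/s}(B)$ and $\lMEE{F/C}{l/s}(\dual B)$, exploiting the fact that dualization is an involution and that it preserves formula structure and size exactly. The key observation is purely syntactic: given a $B$-formula $\phi$, replacing each occurrence of a function symbol $f\in B$ by the corresponding symbol $\dual f\in\dual B$ yields a $\dual B$-formula, which I will call $\dual\phi$, and this transformation is clearly computable in logarithmic space since it is just a symbol-by-symbol relabeling that leaves the tree (or DAG) structure and the variables untouched.

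First I would verify the central semantic fact that $\dual\phi$ computes exactly the dual of the function computed by $\phi$, i.e. $\dual\phi(\alpha) = \overline{\phi(\overline\alpha)}$ for every assignment $\alpha$, where $\overline\alpha$ denotes the complementary assignment. This follows by a straightforward induction on the structure of $\phi$: the base case for a variable $x$ is immediate, and the inductive step for $f(\varphi_1,\dots,\varphi_n)$ unwinds the definition of $\dual f$ together with the induction hypothesis applied to each $\varphi_i$. A direct consequence is that dualization preserves equivalence: $\phi\equiv\psi$ if and only if $\dual\phi\equiv\dual\psi$, since if two functions agree everywhere then so do their duals.

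Next I would assemble the reduction. Map an instance $\pair{\phi}{k}$ of $\lMEE{F/C}{l/s}(B)$ to the instance $\pair{\dual\phi}{k}$ of $\lMEE{F/C}{l/s}(\dual B)$. The crucial accounting step is that dualization preserves both notions of size exactly, $\lSIZE{l}(\phi)=\lSIZE{l}(\dual\phi)$ and $\lSIZE{s}(\phi)=\lSIZE{s}(\dual\phi)$, because relabeling $f$ as $\dual f$ changes neither the number of literal occurrences nor the number of function symbols. For correctness I argue both directions: if $\psi$ is a $B$-formula with $\psi\equiv\phi$ and size at most $k$, then $\dual\psi$ is a $\dual B$-formula with $\dual\psi\equiv\dual\phi$ and the same size, so $\pair{\dual\phi}{k}$ is a yes-instance; conversely, any small $\dual B$-formula equivalent to $\dual\phi$ dualizes back to a small $B$-formula equivalent to $\phi$, using that $\dual{(\dual B)}=B$ and that dualization is an involution on formulas. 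This establishes the reduction in one direction, and by symmetry (swapping the roles of $B$ and $\dual B$) the reverse reduction is identical, giving the claimed equivalence $\eqlogm$.

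I do not expect any genuine obstacle here; the statement is essentially bookkeeping built on the involutive nature of dualization. The only point requiring mild care is the circuit case, where one must confirm that relabeling gate types in a DAG is still a valid log-space operation and that shared subcircuits dualize consistently, but since the transformation acts locally on each gate and leaves the wiring intact, this presents no real difficulty. The induction establishing $\dual\phi(\alpha)=\overline{\phi(\overline\alpha)}$ is the technical heart, and it is entirely routine.
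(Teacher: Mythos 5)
Your proof is correct and takes the same route the paper (implicitly) does: the paper gives no detailed argument, simply noting that minimization is ``trivially invariant'' under the dualization $f\mapsto\dual f$, which is exactly the symbol-relabeling reduction you spell out. Your inductive verification of $\dual\phi(\alpha)=\overline{\phi(\overline\alpha)}$, the size accounting, and the involution argument are all sound and just make the paper's one-line justification explicit.
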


\subsection{Tractable Cases: Polynomial-Time algorithms}\label{section:lewis:ptime}

An $n$-ary Boolean function $f$ is an OR-function if it is constant or if $f(x_1,\dots,x_n)$ is equivalent to $x_{r_1}\vee x_{r_2}\vee\dots\vee x_{r_m}$ for a subset $\set{x_{r_1},x_{r_2},\dots,x_{r_m}}\subseteq\set{x_1,\dots,x_n}$. AND- and XOR-functions are defined analogously. We show that formulas using only these functions can be minimized easily:

\begin{theorem}\label{theorem:lewis:ptime cases}
 $\lMEE{F/C}{l/s}(B)$ can be solved in polynomial time if $B$ contains only OR-functions, only AND-functions, or only XOR-functions.
\end{theorem}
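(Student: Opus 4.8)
The plan is to reduce each of the three cases to a trivial normal form and then to compute the size of a minimum representation of that form directly. First I would prove a normal-form lemma. If $B$ contains only OR-functions, then by induction on the structure of a $B$-formula $\varphi$ one shows that $\varphi\equiv 0$, $\varphi\equiv 1$, or $\varphi\equiv\bigvee_{x\in S}x$ for some nonempty $S\subseteq\var{\varphi}$: a single variable is already of this form, and in the step an application $f(\varphi_1,\dots,\varphi_n)\equiv\bigvee_{i\in I_f}\varphi_i$ (or a constant, if $f$ is constant) is the disjunction of the inductively normalized arguments in the relevant positions $I_f$, and a disjunction of constants and subset-disjunctions is again a constant or a subset-disjunction. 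The AND case then follows by Proposition~\ref{prop:lewis duality for mee} and duality. For XOR the same induction gives that $\varphi$ is equivalent to an affine function $c\oplus\bigoplus_{x\in S}x$ over $\GF{2}$, using that a symmetric difference of subset-XORs with constant parts is again of this shape. In particular the semantics of $\varphi$ is captured entirely by the pair consisting of its normal-form type and the set $S$, and equivalence of two such formulas becomes trivial to test.

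Second, I would compute this normal form for the input in polynomial time. In the OR case the collapsed function is monotone, so $\varphi\equiv 1$ iff $\varphi$ evaluates to $1$ on the all-zeros assignment; otherwise $x\in S$ iff $\varphi$ evaluates to $1$ on the assignment that sets only $x$ to $1$, and $\varphi\equiv 0$ iff $S=\emptyset$. Thus $\card{\var{\varphi}}+1$ evaluations suffice, each polynomial in the length of $\varphi$. The AND case is dual. In the XOR case I would recover the affine form instead: the constant $c$ is the value on the all-zeros assignment, and $x\in S$ iff flipping $x$ flips the value, again using $\card{\var{\varphi}}+1$ evaluations.

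Third, and this is the heart, I would reduce the question ``is there an equivalent $\psi$ with $\lSIZE{l/s}(\psi)\le k$'' to computing the minimum size $g(m)$ of a $B$-formula (resp.\ circuit) representing the normal form, where $m=\card{S}$; the answer is then simply $g(m)\le k$ (the constant normal forms are handled separately, each requiring a fixed, $B$-dependent minimum size to express $0$ or $1$). The lower bound $g(m)\ge m$ for $\lSIZE{l}$ is immediate, since the collapsed function depends on every variable of $S$ so each must occur, and an analogous counting bound holds for $\lSIZE{s}$. For the upper bound I would fix once and for all---legitimate because $B$ is finite---a most efficient connector $f\in B$ together with the cheapest padding for its irrelevant input positions (a constant if $B$ expresses one with no literals, and otherwise a single reused variable, which costs one literal and never changes the value in an irrelevant position), and then build the normal form as a ``caterpillar'' that places the variables of $S$ in the relevant positions of the $f$-gates. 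This shows that, apart from finitely many small values of $m$, $g$ is an affine function $g(m)=am+b$ whose coefficients depend only on fixed parameters of $B$ (the relevant-to-total input ratio of the best connector and the padding cost); these coefficients and the finitely many exceptional values are precomputable in constant time, so $g(m)$ and hence the whole decision are computable in polynomial time. The circuit measure is handled the same way, noting that sharing lets irrelevant positions point to nodes already present and can therefore only lower the padding cost; the XOR case is identical, using that each variable of $S$ must occur an odd number of times and that a variable in an irrelevant position, or $x\oplus x\equiv 0$, supplies the padding.

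The step I expect to be the main obstacle is the exact determination of $g(m)$, and specifically the treatment of the irrelevant input positions of the connectors: these positions must be filled and hence contribute to the size, so the naive guess $g(m)=m$ is in general false, and one must argue both that the caterpillar construction is optimal and that no cleverer nesting---or, for circuits, no cleverer sharing---beats it. I would close this gap with a matching lower bound that counts, in an arbitrary formula tree or circuit DAG computing the normal form, the input slots forced by the relevant positions of the gates against those that can actually carry distinct variables of $S$, showing the total cannot drop below the affine value achieved by the construction.
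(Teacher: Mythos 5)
Your first two steps (the normal-form lemma and its polynomial-time computation via $\card{\var{\varphi}}+1$ evaluations) are sound and correspond to what the paper does implicitly: it represents the input, and every buildable $B$-formula, by a triple recording the constant part, the number of relevant variables, and the number of leaf occurrences. The gap is in your third step. You claim that, apart from finitely many exceptions, the minimum size $g(m)$ is an affine function $am+b$ realized by a caterpillar over a single ``most efficient'' connector, with a matching lower bound still to be supplied. Both halves of this are problematic. The structural claim is simply false: for $B=\set{f}$ with $f$ a $10$-ary OR of all ten of its arguments, every formula with $t$ gates has exactly $9t+1$ leaves, so for the literal measure $g(m)=9\lceil (m-1)/9\rceil+1$, which is eventually quasi-affine (affine plus a periodic correction of period $9$) but never affine. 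More generally, $g(m)$ is the optimal value of an integer program over the arities and relevant-arities of the gates of $B$; for exact rather than asymptotic values, mixing connectors of different arities can beat any single-connector caterpillar (a coin-change phenomenon), and in the XOR case there is an extra parity constraint because identifying two relevant variables removes both from the relevant set. The ``matching lower bound'' you defer to the end is therefore not a finishing touch --- it is the entire content of the theorem, and it is exactly where these number-theoretic complications live.

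The repair is to abandon the closed form. Since the semantics of a formula in these classes is captured by the pair (constant part, set of relevant variables) and its size by the number of leaves (or gates), it suffices to compute, by a fixed-point iteration over a table of polynomial size, the set of \emph{all} achievable triples $(c,l,n)$ --- closing the triples of the functions in $B$ under substitution into relevant positions, substitution into irrelevant positions, and variable identification --- and then check whether some achievable triple matches the input's semantics with small enough size. This is precisely the paper's algorithm: your normal form supplies the right invariants for it, but your attempt to solve the resulting optimization problem by an explicit formula rather than by search is where the argument breaks.
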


We mention that the theorem, as all of our results in this section, applies to all four combinations of $F/C$ and $s/l$. We also stress that all algorithms in this paper do not only determine whether a formula with the given size restriction exists, but also compute a minimum equivalent formula.

\begin{proof}
 Let $\star$ denote the binary OR-operator if $B\subseteq\cV$, or the binary XOR-operator if $B\subseteq\cL$ (the case $\cE$ follows from Proposition~\ref{prop:lewis duality for mee}, since $\dual{\cV}=\cE$). We know that every element of $B$ is of the form $c_0\star c_1 x_1\star\dots c_n x_n$, where the $c_i$ indicate which of the $x_i$ is a relevant argument. Note that if $B\subseteq\cV$ and $c_0=1$, then none of the arguments are relevant. Without loss of generality, assume that the first $l$ of the variables are relevant. We then represent the function $f$ with the tuple $(c,l,n)$. We now show how building formulas from the functions in $B$ can be expressed with arithmetic operations on these tuples. Given two formulas representing the functions $f_1=(c_1,l_1,n_1)$ and $f_2=(c_2,k_2,n_2)$, we can, using the operations allowed in superposition, obtain formulas representing the following: 

\begin{description}
 \item[Substituting $f_2$ for a relevant argument of $f_1$]{\ \\$(c_1,l_1,n_1)\circ_{\mathtext{rel}}(c_2,l_2,n_2)=(c_1\star c_2,l_1+l_2-1,n_1+n_2-1)$\\ \emph{applicable iff $l_1\ge 1$}}
 \item[Substituting $f_2$ for an irrelevant argument of $f_1$]{\ \\$(c_1,l_1,n_1)\circ_{\mathtext{rel}}(c_2,l_2,n_2)=(c_1,l_1,n_1+n_2-1)$\\ \emph{applicable iff $l_1<n_1$}}
 \item[Identifying two relevant variables in $f_1$ in the case $B\subseteq\cV$]{\ \\ $(c_1,l_1,n_1)\rightarrow (c_1,l_1-1,n_1)$ \\ \emph{applicable iff $l_1\ge 2$}}
 \item[Identifying two relevant variables in $f_1$ in the case $B\subseteq\cL$]{\ \\ $(c_1,l_1,n_1)\rightarrow (c_1,l_1-2,n_1)$ \\ \emph{applicable iff $l_1\ge 2$}}
\end{description}

Note that identifying a relevant and an irrelevant variable comes down to simply renaming the irrelevant variable, and therefore is not of interest to us.

We now describe the polynomial time algorithm. Assume that we are given a $B$-formula $\varphi(x_1,\dots,x_n)$ and a natural number $k$. For the classes of functions that we are looking at, it is easy to determine which variables of the formula are relevant: In both cases, the $i$-th variable of $\varphi$ is relevant if and only if $$\varphi(\underbrace{0,\dots,0}_{i-1},0,\underbrace{0,\dots,0}_{n-i-1})\neq\varphi(\underbrace{0,\dots,0}_{i-1},1,\underbrace{0,\dots,0}_{n-i-1}).$$ Without loss of generality, assume that the relevant variables of $\varphi$ are exactly the variables $x_1,\dots,x_n$. Note that since $\varphi$ describes a function from $B$, we can again represent $\varphi$ as $(c_{\varphi},l_{\varphi},n_{\varphi})$ as above. The question if we can find a $B$-formula equivalent to $\varphi$ with less then $k$ variable occurrences is the same as the question if we can obtain, from the tuples representing the functions in $B$, a tuple of the form $(c_{\varphi},l_{\varphi},n')$, where $n'\ge n$ and $l_{\varphi}+n'\leq k$ (we can then, by renaming the $n'$ irrelevant variables to $x_{l+1},\dots,x_n$, construct the equivalent formula).

It is obvious that if we have $0$-ary constant functions in $B$, then we can remove irrelevant variable occurrences from any $B$-formula by replacing them with the constant functions, and hence $\varphi,k$ is a positive instance if and only if $l+n\leq k$. Therefore assume without loss of generality that all of the functions in $B$ are of the form $(c,l,n)$ for $n\ge 1$. In this case, the operations defined above are all non-decreasing in $n$. Hence we can simply generate a table containing all entries $(c,l,n)$ for $c\in\set{0,1}$ and $l,n\leq\max\left({\set{n'\ \vert\ \exists c',l' (c',l',n')\in B}\cup\set{n_{\varphi}}}\right)$, where an entry is set to true if and only if a corresponding formula can be built from $B$. We start by setting all entries to true which correspond to functions in $B$, and then simply apply the operations defined above until no changes occur anymore in the table, then we check if an entry as required is set to true. Since $n$ is smaller than the input, the table is of polynomial size, and the procedure obviously can be performed in polynomial time.

Note that this also gives a polynomial time procedure if the set $B$ is part of the input, if the functions in $B$ are given using the formula representations (essentially, the tuples $(c,l,n)$ in unary).
\end{proof}

\subsection{Hardness Results: Relationship to Satisfiability}

The satisfiability problem for the formulas covered in Section~\ref{section:lewis:ptime} can easily be solved in polynomial time. We now show that this is indeed a prerequisite for a tractable minimization problem---formally, we prove that the complement of the satisfiability problem (i.e., the set of all binary strings that are not positive instances of $\sat{B}$) reduces to the minimization problem.

\begin{theorem}\label{theorem:unsat red mee lewis}
 For every finite set $B$ of Boolean functions, $\overline{\sat{B}}\redlogm\lMEE{F/C}{l/s}(B)$.
\end{theorem}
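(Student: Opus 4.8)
The plan is to reduce unsatisfiability to the existence of a small equivalent formula by an \emph{amplification} construction: from a given $B$-formula $\varphi$ I build, in logarithmic space, a $B$-formula $\Phi$ together with a threshold $k$ so that the function computed by $\Phi$ collapses to the all-zero constant exactly when $\varphi$ is unsatisfiable, and otherwise genuinely depends on a controllably large number of variables. The reduction then works because any formula or circuit must refer to each of its relevant variables at least once, so its size --- in every one of the four variants $F/C$ and $l/s$ --- grows with the number of relevant variables. Hence a single threshold $k$ separates ``few relevant variables'' (unsatisfiable) from ``many relevant variables'' (satisfiable).

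The core case is when $\clone B$ contains binary $\wedge$. I set $\Phi := \varphi \wedge y_1 \wedge \dots \wedge y_m$, where $\wedge$ is realized by a fixed $B$-gadget, the $y_i$ are fresh variables, and the conjunctions are chained; this is clearly computable in logarithmic space. If $\varphi$ is unsatisfiable then $\varphi \equiv 0$, hence $\Phi \equiv 0$; since $\Phi$ is itself a $B$-formula, the constant $0$ is expressible over $B$ and has some minimum size $s_0$ depending only on $B$. If $\varphi$ is satisfiable, fix a satisfying assignment $\alpha$ for $\varphi$; substituting $\alpha$ leaves $y_1 \wedge \dots \wedge y_m$, so every $y_i$ is relevant in $\Phi$ and any equivalent formula or circuit has size growing with $m$. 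Choosing $m$ a fixed constant larger than $s_0$ (depending only on $B$) and $k := s_0$ yields $\varphi \in \overline{\sat{B}}$ if and only if $\Phi$ has an equivalent formula of size at most $k$. The dual situation, where $\clone B$ contains $\vee$ but not necessarily $\wedge$, is handled symmetrically via Proposition~\ref{prop:lewis duality for mee}, using the gadget $\Phi := \varphi \vee y_1 \vee \dots \vee y_m$ together with the exchange of the roles of $0$ and $1$.

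It remains to treat the sets $B$ for which $\clone B$ contains neither $\wedge$ nor $\vee$. Here Post's lattice~\cite{pos41} forces $\clone B$ to lie inside the affine clone $\cL$, the self-dual clone $\cD$, or an essentially unary clone, and in each of these $\overline{\sat{B}}$ is decidable in logarithmic space: self-dual formulas are always satisfiable (as $\varphi(0,\dots,0) = \overline{\varphi(1,\dots,1)}$), while for affine or essentially unary $B$ unsatisfiability amounts to detecting the all-zero constant, which can be done with the relevance test used in the proof of Theorem~\ref{theorem:lewis:ptime cases}; the reduction then maps to a fixed positive or negative instance. The main obstacle is precisely this uniformity over all $B$: one must verify, using the structure of $\clone B$ and Lewis's analysis of $\sat{B}$~\cite{lew79}, that whenever $\sat{B}$ is not already logspace-trivial the clone really does supply the binary $\wedge$ or $\vee$ needed for the amplifier (with the correct handling of constants), and one must confirm the lower bound ``number of relevant variables $\le$ size'' simultaneously for formulas and circuits and for both the literal and the gate measure, so that one threshold $k$ works throughout.
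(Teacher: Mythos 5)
Your $\wedge$-amplifier is sound, but the ``dual situation'' is a genuine gap: the gadget $\Phi := \varphi \vee y_1 \vee \dots \vee y_m$ tests the wrong property. If $\varphi$ is \emph{unsatisfiable}, then $\Phi \equiv y_1\vee\dots\vee y_m$, which has $m$ relevant variables and therefore only \emph{large} equivalent formulas --- the opposite of what your reduction needs; the formula collapses to something small precisely when $\varphi\equiv 1$. So this gadget reduces non-tautology, not unsatisfiability, to minimization. Proposition~\ref{prop:lewis duality for mee} does not rescue it: it identifies $\lMEE{F/C}{l/s}(B)$ with $\lMEE{F/C}{l/s}(\dual{B})$, but dualizing the \emph{input} turns ``$\varphi$ unsatisfiable'' into ``the dualized formula is a tautology,'' not into an instance of $\overline{\sat{\dual{B}}}$, so you cannot hand it to the $\wedge$-amplifier over $\dual{B}$. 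The hole is repairable, but only by moving this entire case into your ``logspace-trivial'' bucket: in every clone containing $\vee$ but not $\wedge$ (the $\cV$-clones and the $\clonename{S_0}$-family), a formula is unsatisfiable if and only if it evaluates to $0$ on the all-ones assignment, since these clones consist either of $1$-reproducing functions or of OR-functions possibly with the constant $0$. Two smaller points: you must dispose up front of the case where no unsatisfiable $B$-formula exists (e.g., $B=\set\wedge$ has $\wedge\in\clone{B}$ yet $s_0$ is undefined and the theorem is vacuous there), and $s_0$ and $m$ have to be chosen separately for the literal and the gate measure.

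For comparison, the paper's proof needs no gadget and no case analysis over Post's lattice. It fixes a minimal unsatisfiable $B$-formula $\psi$, sets $k=\lSIZE{l}(\psi)$ (respectively $k=\lSIZE{s}(\psi)$), first checks in logarithmic space whether the input $\phi$ has a satisfying assignment of constant Hamming weight (at most $k$ true variables, respectively at most as many as can appear in a size-$k$ formula), outputs a fixed negative instance if so, and outputs $\pair{\phi}{k}$ otherwise; any equivalent formula of size at most $k$ mentions only constantly many variables and hence would yield a low-weight satisfying assignment of $\phi$. That argument is uniform in $B$ and sidesteps the clone analysis on which your route depends.
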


\begin{proof}
 Without loss of generality, assume that there is an unsatisfiable $B$-formula $\psi$ (if such a formula does not exist, the result is trivial). We first state the reduction to $\lMEE{F/C}{l}$. 

 For this, let $k=\lSIZE l(\psi)$, and let $\phi$ be a $B$-formula. We first test whether there is an assignment that makes at most $k$ variables true and satisfies $\phi$. In this case, the reduction outputs a string that is not a positive instance of $\lMEE{F/C}{l}(B)$. Otherwise, we produce the instance $\pair\phi k$.

 The reduction can be performed in logarithmic space, since $k$ is constant and the truth value of a formula can be determined in logarithmic space~\cite{bus87}. We prove that it is correct: First assume that $\psi$ is unsatisfiable. In that case, the reduction produces the string $\pair\phi k$ which is a positive instance, since $\phi$ is equivalent to $\psi$, and $\lSIZE l(\psi)=k$.

 Now assume that $\phi$ is satisfiable. If there is an assignment that satisfies $\psi$ and has at most $k$ true variables, then the result of the reduction is not a positive instance of $\lMEE{F/C}{l}(B)$ by construction. Hence assume that this is not the case, then the result of the reduction is $\pair\phi k$. Assume indirectly that this is a positive instance. Then $\phi$ is equivalent to a formula or circuit $\chi$ with at most $k$ literals. Since $\phi$ is satisfiable, so is $\chi$. Since at most $k$ literals appear in $\chi$, there is a satisfying assignment of $\chi$ (and thus of $\phi$) that sets at most $k$ variables to true, which is a contradiction.

 The reduction to $\lMEE{F/C}{s}$ is analogous: Let $k=\lSIZE s(\psi)$, and let $n$ be the maximal number of variables in a formula $\chi$ with $\lSIZE s(\chi)\leq k$. Since there are only finitely many formulas with this size, $n$ is a constant. The remainder of the proof is identical to the above case, where instead of $k$ variables, we consider $n$ variables:

 For an input formula $\phi$, we first test whether there is an assignment that makes at most $n$ variables true and satisfies $\phi$. In this case, the reduction outputs a string that is not a positive instance of $\lMEE{F/C}{l}(B)$. Otherwise, we produce the instance $\pair\phi k$. Again, the reduction can be performed in logarithmic space.

 If $\phi$ is unsatisfiable, the reduction produces $\pair\phi n$ which is a positive instance as $\phi\equiv\psi$. Hence assume $\phi$ is satisfiable, and indirectly assume that the reduction produces a positive instance. In particular, $\phi$ cannot be satisfied with at most $n$ variables set to true, and the result of the reduction is $\pair\phi k$. Hence there is a formula $\chi$ with $\lSIZE s(\chi)\leq k$ and $\phi\equiv\chi$. Since $\phi$ is satisfiable, so is $\chi$, and since $\lSIZE s(\chi)\leq k$, we know that at most $n$ variables appear in $\chi$. Hence $\chi$ (and thus $\phi$) has a satisfying assignment with at most $n$ variables set to true, a contradiction.
\end{proof}

Using results on the complexity of $\sat{B}$~\cite{lew79}, we obtain hardness results for a large class of sets $B$:

\begin{corollary}
 Let $B$ be a finite set of Boolean functions such that there is a $B$-formula that is equivalent to $x\wedge\overline y$. Then $\lMEE{F/C}{l/s}(B)$ is $\CONP$-hard.
\end{corollary}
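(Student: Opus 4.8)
The plan is to combine the reduction just established in Theorem~\ref{theorem:unsat red mee lewis} with the known classification of satisfiability due to Lewis~\cite{lew79}. Theorem~\ref{theorem:unsat red mee lewis} already tells us that $\overline{\sat{B}}\redlogm\lMEE{F/C}{l/s}(B)$ for \emph{every} finite $B$, so the whole content of the corollary reduces to establishing that, under the stated hypothesis, $\overline{\sat{B}}$ is itself \CONP-hard. Since logarithmic-space many-one reductions compose and in particular yield polynomial-time many-one reductions, transitivity through Theorem~\ref{theorem:unsat red mee lewis} then transports \CONP-hardness from $\overline{\sat{B}}$ to the minimization problem, uniformly in all four variants $F/C$ and $l/s$.

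The key step is to show that the expressibility assumption forces $\sat{B}$ to be \NP-hard. This is precisely the \NP-complete case of Lewis's dichotomy: being able to write a $B$-formula equivalent to $x\wedge\overline y$ means that $B$ generates (at least) the clone associated with this function, which is exactly the boundary at which $\sat{B}$ becomes \NP-complete; for all weaker $B$ the satisfiability problem lies in \PTIME. I would first invoke~\cite{lew79} to conclude that $\sat{B}$ is \NP-hard, then take complements to obtain that $\overline{\sat{B}}$ is \CONP-hard, and finally chain this with $\overline{\sat{B}}\redlogm\lMEE{F/C}{l/s}(B)$ to conclude that $\lMEE{F/C}{l/s}(B)$ is \CONP-hard.

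The only point requiring care is aligning the hypothesis ``there is a $B$-formula equivalent to $x\wedge\overline y$'' with the exact formulation of Lewis's criterion, in particular whether constants are permitted in the formulas used to express $x\wedge\overline y$. Here the hypothesis is the clean, constant-free statement that $x\wedge\overline y$ itself is expressible as a $B$-formula, which matches the clone-theoretic condition underlying the \NP-hardness half of the dichotomy, so no auxiliary gadget or constant-elimination argument is needed. Beyond this matching, there is no genuine obstacle: the argument is a short composition of a cited hardness result with the reduction proved immediately above, and the uniformity over $F/C$ and $l/s$ is inherited directly from the corresponding uniformity already present in Theorem~\ref{theorem:unsat red mee lewis}.
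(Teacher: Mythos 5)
Your proposal is correct and follows exactly the paper's own argument: invoke Lewis's result that $\sat B$ is \NP-complete when $x\wedge\overline y$ is expressible, pass to the complement, and compose with the reduction $\overline{\sat{B}}\redlogm\lMEE{F/C}{l/s}(B)$ from Theorem~\ref{theorem:unsat red mee lewis}. No substantive difference from the paper's proof.
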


\begin{proof}
 This follows from Theorem~\ref{theorem:unsat red mee lewis} and the result shown in~\cite{lew79}, which proves that $\sat B$ is $\NP$-complete for these choices of $B$.
\end{proof}

\subsection{Hardness Results: Reducing from Equivalence}

The remaining cases are those where satisfiability is tractable, but which are not of the forms covered by Theorem~\ref{theorem:lewis:ptime cases}. We show that in these cases, minimization is $\CONP$-hard using a reduction from the \emph{equivalence problem} for formulas, which asks to determine whether two given formulas are equivalent. We first need a technical lemma that will be used in our constructions later. In the following, a variable $x$ is \emph{relevant} for a function $f$ if the value of $f$ is in fact influenced by the value of the variable, i.e., if there exist assignments $\alpha$ and $\alpha'$ such that $\alpha$ and $\alpha'$ agree on all variables except $x$, and $f(\alpha)\neq f(\alpha')$. Note that the size of the smallest formula is always at least as large as that of the smallest circuit, hence the following result covers the circuit case as well.

\begin{proposition}\label{proposition:minimal size of circuit for non degenerated function}
 Let $B$ be a finite set of Boolean functions such that $B$ contains a function that it at least binary. Let $m$ denote the maximal arity of a function in $B$, and let $l>1$. Then for every $B$-circuit $C$ with $m\cdot l$ relevant input variables, we have that $\lSIZE s(C)\ge l+1$
\end{proposition}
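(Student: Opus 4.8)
The plan is to bound the number of relevant input variables of a $B$-circuit from above in terms of its number of gates $\lSIZE{s}(C)$, and then to invert that bound. Concretely, I would argue the contrapositive: assuming $\lSIZE{s}(C)\le l$, I show that $C$ can have at most $(m-1)\lSIZE{s}(C)+1\le ml-1<ml$ relevant variables, contradicting the hypothesis that it has $m\cdot l$ of them.

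First I would restrict attention to the \emph{live} gates of $C$, i.e., those gates from which there is a directed path to the output gate. A variable that is not an input to any live gate cannot influence the output and is therefore not relevant, so every relevant variable must feed into at least one live gate. Writing $s'$ for the number of live gates (so $s'\le\lSIZE{s}(C)$), I count the input wires of the live gates: since every function in $B$ has arity at most $m$, there are at most $m\cdot s'$ such wires.

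The crucial step is to observe that many of these wires are consumed internally rather than by input variables. By definition of liveness, each non-output live gate has its output forwarded into some other live gate, contributing at least one internal wire; these wires have pairwise distinct sources, so there are at least $s'-1$ of them. Hence at most $m\cdot s'-(s'-1)=(m-1)s'+1$ of the live input wires originate from variables. Since each relevant variable occupies at least one such wire, $C$ has at most $(m-1)s'+1\le (m-1)\lSIZE{s}(C)+1$ relevant variables. Plugging in the hypothesis gives $ml\le(m-1)\lSIZE{s}(C)+1$, i.e., $\lSIZE{s}(C)\ge\frac{ml-1}{m-1}$; because $l>1$ we have $ml-1>l(m-1)$, so this lower bound is strictly greater than $l$, and since $\lSIZE{s}(C)$ is an integer we conclude $\lSIZE{s}(C)\ge l+1$. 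Note that $m\ge 2$ (guaranteed by the assumption that $B$ contains an at-least-binary function) is precisely what makes the division legitimate and the strict inequality hold.

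I expect the main obstacle to be the careful bookkeeping in the counting step, in particular the justification of the lower bound $s'-1$ on the number of internal wires. This is exactly where the restriction to live gates is essential: it ensures that every counted non-output gate genuinely forwards its output toward the circuit's sink, and it lets me discard dead gates (which only inflate $\lSIZE{s}(C)$) and variables feeding only into dead gates (which are never relevant). One should also briefly dispose of the degenerate case $s'=0$, where $C$ outputs a variable or constant and thus has at most one relevant variable, which is already fewer than $ml\ge 2m\ge 4$.
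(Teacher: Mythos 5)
Your argument is correct and is essentially the same wire-counting argument the paper uses: the paper's one-line proof observes that a connected $B$-circuit with $l$ non-input gates can connect at most $m\cdot l-(l-1)<m\cdot l$ input gates, which is exactly your bound of at most $(m-1)s'+1$ variable-fed wires. You simply spell out the bookkeeping (live gates, the $s'-1$ internal wires, the degenerate case) that the paper dismisses as trivial.
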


\begin{proof}
 This follows trivially since a connected $B$-circuit with $l$ non-input gates can only connect $m\cdot l-(l-1)<m\cdot l$ input gates. 
\end{proof}

The proof of the theorem below relies on the following idea: Given two formulas as input for the equivalence problem, we combine them into a single formula which is ``trivial'' if the formulas are equivalent, but ``complicated'' otherwise. The ``gap'' between the cases is large enough to yield a reduction to the minimization problem.

\begin{theorem}
 Let $B$ be a finite set of Boolean functions such that $\wedge\in\clone B$ and $\vee\in\clone{B\cup\set 1}$. Then $\lMEE{F/C}{l/s}(B)$ is $\CONP$-hard.
\end{theorem}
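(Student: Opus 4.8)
The plan is to reduce the equivalence problem for $B$-formulas, which is $\CONP$-hard under these assumptions (by Reith's classification of the equivalence problem~\cite{rei01}, since $\wedge\in\clone B$ and $\vee\in\clone{B\cup\set 1}$ force $\clone{B\cup\set 1}$ to contain the full monotone basis), to $\lMEE{F/C}{l/s}(B)$. Given two $B$-formulas $\phi_1,\phi_2$ over a common variable set $\set{x_1,\dots,x_n}$, I want to build a single instance $\pair\Psi k$ so that a small equivalent formula exists exactly when $\phi_1\equiv\phi_2$. Following the picture in the remark before the theorem, $\Psi$ should collapse to something tiny when $\phi_1\equiv\phi_2$, but be forced to depend on a large block of fresh variables otherwise, so that Proposition~\ref{proposition:minimal size of circuit for non degenerated function} supplies the size gap.

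For the combination I would introduce $m\cdot l$ fresh variables $y_1,\dots,y_{ml}$ (with $m$ the maximal arity in $B$ and $l$ a parameter fixed last), abbreviate $Y=y_1\wedge\dots\wedge y_{ml}$, and use the symmetric monotone combination $\Psi_0=(\phi_1\wedge\phi_2)\vee\bigl((\phi_1\vee\phi_2)\wedge Y\bigr)$. If $\phi_1\equiv\phi_2$ then $\phi_1\wedge\phi_2\equiv\phi_1\vee\phi_2\equiv\phi_1$, so by absorption $\Psi_0\equiv\phi_1$ and no $y_j$ is relevant. If $\phi_1\not\equiv\phi_2$, fix $\alpha$ with $\phi_1(\alpha)\neq\phi_2(\alpha)$; on the slice $x=\alpha$ exactly one of $\phi_1,\phi_2$ is $1$, so $\phi_1\wedge\phi_2=0$ and $\phi_1\vee\phi_2=1$, and $\Psi_0$ restricts to $Y$, making every $y_j$ relevant. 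Thus the number of relevant variables jumps from at most $n$ to at least $ml$ precisely when equivalence fails. The symmetric form is what lets a \emph{single} gadget capture both directions of equivalence at once.

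Realizing $\Psi_0$ as a genuine $B$-formula is where the one real difficulty lies. Since $\wedge\in\clone B$ every $\wedge$ can be expanded directly, but $\vee\in\clone{B\cup\set 1}$ only gives an OR-gadget that uses the constant $1$, which need not be a $B$-formula. I would take a fresh variable $t$, substitute $t$ for each constant $1$ in these gadgets to obtain a $B$-formula $\Psi_0'(x,y,t)$ with $\Psi_0'|_{t=1}=\Psi_0$, and output $\Psi=t\wedge\Psi_0'$ (again using $\wedge\in\clone B$). Conjoining with $t$ kills the uncontrolled $t=0$ behaviour of the substituted gadgets, so $\Psi\equiv t\wedge\Psi_0$ as a function. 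In the equivalent case $\Psi\equiv t\wedge\phi_1$, which has a $B$-formula of size $O(\lSIZE{l/s}(\phi_1))$ (built the same way, with $1$ replaced by $t$ and an outer conjunction with $t$); in the non-equivalent case the relevance of all $y_j$ is already witnessed on the slice $t=1$, so $\Psi$ retains at least $ml$ relevant variables. Verifying that this $t$-trick preserves both the collapse and the relevance block simultaneously is the step I expect to be the main obstacle.

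To finish, note that $\wedge\in\clone B$ forces $B$ to contain an at-least-binary function, so Proposition~\ref{proposition:minimal size of circuit for non degenerated function} applies: in the non-equivalent case any $B$-circuit for $\Psi$ has $\lSIZE s\geq l+1$, while $\lSIZE l\geq ml$ holds trivially since each relevant variable must occur. Let $k$ be the size (in the chosen measure) of the explicit formula for $t\wedge\phi_1$ above, and choose the fresh block large enough that the non-equivalent lower bound exceeds $k$ (i.e.\ $l+1>k$ for the $s$-measure, $ml>k$ for the $l$-measure). Then $\pair\Psi k$ is a positive instance of $\lMEE{F/C}{l/s}(B)$ if and only if $\phi_1\equiv\phi_2$. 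All quantities are polynomially bounded and computable in logarithmic space, and because the minimal formula is never smaller than the minimal circuit, the same threshold settles the $F$ and $C$ variants at once, with the relevant-variable count handling the $l$-measure for both.
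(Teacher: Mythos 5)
Your proposal matches the paper's proof essentially step for step: the same reduction from the \CONP-complete equivalence problem for $B$-formulas, the same gadget $(\phi_1\wedge\phi_2)\vee((\phi_1\vee\phi_2)\wedge Y)$ with a block of $m\cdot l$ fresh conjoined variables, the same trick of replacing the constant $1$ by a fresh variable $t$ and conjoining $t$ on the outside, the same threshold $k=\lSIZE{l/s}(t\wedge\phi_1)$, and the same appeal to Proposition~\ref{proposition:minimal size of circuit for non degenerated function} (resp.\ counting relevant variables for the literal measure) to get the size gap in the non-equivalent case. The argument is correct and is the paper's argument.
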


\begin{proof}
  From Theorem~4.15 in~\cite{rei01}, we know that the problem of testing whether two given $B$-formulas are equivalent is \CONP-complete. We show that this problem reduces to $\lMEE{F/C}{l/s}(B)$. Since $\vee\in\clone B$ and $\wedge\in\clone{B\cup\set1}$, there are $B$-formulas $f_\vee(x,y,t)$ and $f_\wedge(x,y)$ such that $f_\wedge(x,y)\equiv x\wedge y$, and $f_\vee(x,y,1)\equiv x\vee y$. Let $m$ denote the maximal arity of a function in $B$. We first consider $\lMEE{F/C}s(B)$.

 Let $H_1$ and $H_2$ be $B$-formulas, and define
 \begin{itemize}
  \item $l=\lSIZE s(f_\wedge(H_1,t))$, without loss of generality assume $l>1$.
  \item $Z$ is a $B$-formula equivalent to $\bigwedge_{i=1}^{m\cdot l}z_i$ for new variables $z_i$.
  \item $G=f_\wedge(f_\vee(f_\wedge(H_1,H_2),f_\wedge(f_\vee(H_1,H_2,t),Z),t),t)$.
 \end{itemize}

 Note that $l$ is polynomial in the input, since $m$ is constant, and $f_\wedge(H_1,t)$ clearly can be constructed. Hence the formula $Z$ can be computed in polynomial time as well, since we can represent the conjunction over the $z_i$'s as a tree of logarithmic depth, which grows only polynomially when repeatedly implementing $\wedge$ with $f_\wedge$.

 Also note that by construction, we have $$G\equiv(H_1\wedge H_2)\vee ((H_1\vee H_2)\wedge\bigwedge_{i=1}^{m\cdot l} z_i).$$

 We claim that $H_1\equiv H_2$ if and only if $\pair Gl\in\lMEE{F/C}{s}(B)$.

 First assume that $H_1\equiv H_2$. In this case, $G$ is equivalent to $t\wedge H_1$, which is equivalent to $f_\wedge(H_1,t)$, and thus there is a $B$-formula/circuit equivalent to $G$ with size $l$ by definition of $l$.

 Now assume that $H_1\not\equiv H_2$. Then there is an assignment $\alpha$ that, without loss of generality, satisfies $H_1$ but not $H_2$. In this case, it easily follows that $G[\alpha]$ (i.e., $G$ with the values for $\alpha$ hard-coded into the input gates, which is not necessarily a $B$-circuit anymore) is equivalent to $t\wedge\bigwedge_{i=1}^{m\cdot l} z_i$. Therefore, in this case all of the $z_i$ are relevant variables for $G$. Therefore, Proposition~\ref{proposition:minimal size of circuit for non degenerated function}. implies that for every $B$-formula or circuit $\chi$ equivalent to $G$ we have $\lSIZE{s}(\chi)\ge l+1$.

 The proof for $\lMEE{F/C}{l}(B)$ is identical, except in this case we choose $l$ as the size of literals in $f_\wedge(H_1,t)$, and consider a conjunction of $l$ variables $z_i$. In the positive case, the equivalent formula $f_\wedge(H_1,t)$ has $l$ literals, in the negative case, any formula equivalent to $t\wedge\bigwedge_{i=1}^{l}z_i$ needs to have at least $l+1$ literals.
\end{proof}

We now show an analogous hardness result for the case that $B$ can express the ternary majority function. Algebraically, this condition is equivalent to $\clone B$ containing exactly the Boolean functions $f$ which are self-dual (i.e., $\dual f$ is the same function as $f$) and monotone (i.e., if $\alpha_1\leq\beta_1$, \dots, $\alpha_n\leq\beta_n$, then $f(\alpha_1,\dots,\alpha_n)\leq f(\beta_1,\dots,\beta_n)$).

\begin{theorem}
 Let $B$ be a set of Boolean functions such that $\mathit{maj}\in\clone B$, where $\mathit{maj}(x,y,z)=1$ if and only if $x+y+z\ge 2$. Then $\lMEE{F/C}{l/s}(B)$ is $\CONP$-hard.
\end{theorem}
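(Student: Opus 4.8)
The plan is to follow the strategy of the previous theorem and reduce the equivalence problem for $B$-formulas to $\lMEE{F/C}{l/s}(B)$. Since $\mathit{maj}\in\clone B$, the clone $\clone B$ contains every self-dual and monotone function, and by Reith's classification of the equivalence problem~\cite{rei01} deciding equivalence of two $B$-formulas is \CONP-complete; this is the source of hardness. The obstacle relative to the previous theorem is that now $\wedge\notin\clone B$ and $\vee\notin\clone B$ (neither is self-dual), so neither the gadgets $f_\wedge,f_\vee$ nor the conjunction $\bigwedge_i z_i$ used there are available. My idea is to replace all of this by the single identity that makes majority special: writing $f_{\mathit{maj}}$ for a fixed $B$-formula with $f_{\mathit{maj}}\equiv\mathit{maj}$ (which exists as $\mathit{maj}\in\clone B$), we have $f_{\mathit{maj}}(a,a,c)\equiv a$ and $f_{\mathit{maj}}(1,0,c)\equiv f_{\mathit{maj}}(0,1,c)\equiv c$.

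Given two $B$-formulas $H_1,H_2$, I would set $G=f_{\mathit{maj}}(H_1,H_2,W)$, where $W$ is a \emph{blow-up gadget}: a $B$-formula on fresh variables $z_1,\dots,z_N$ that is itself self-dual and monotone and in which every $z_i$ is relevant. Such a $W$ is built in polynomial time as a nested (or balanced) majority $f_{\mathit{maj}}(z_1,z_2,f_{\mathit{maj}}(z_3,z_4,\dots))$: every $B$-formula over $\mathit{maj}$ computes a self-dual monotone function, and each $z_i$ is seen to be relevant by fixing the surrounding arguments to a $(1,0)$-pattern so that the gadget passes the value of $z_i$ through. Now the two cases separate cleanly. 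If $H_1\equiv H_2$, then pointwise $f_{\mathit{maj}}(H_1,H_2,W)\equiv f_{\mathit{maj}}(H_1,H_1,W)\equiv H_1$, so $G\equiv H_1$ and in particular all $z_i$ are irrelevant; hence $G$ has an equivalent $B$-formula of size at most $\lSIZE{s}(H_1)$, namely $H_1$ itself. If $H_1\not\equiv H_2$, pick $\alpha$ with, say, $H_1(\alpha)=1$ and $H_2(\alpha)=0$; substituting $\alpha$ gives $G[\alpha]\equiv f_{\mathit{maj}}(1,0,W)\equiv W$, so every $z_i$ is relevant for $G$ and $G$ therefore has at least $N$ relevant input variables.

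To finish, write $m$ for the maximal arity of a function in $B$, choose $l=\max(2,\lSIZE{s}(H_1))$, build $W$ with $N=m\cdot l$ fresh variables, and set the threshold $k=\lSIZE{s}(H_1)$. In the equivalent case $\pair Gk\in\lMEE{F/C}{s}(B)$ as shown; in the inequivalent case $G$ has at least $m\cdot l$ relevant variables, so Proposition~\ref{proposition:minimal size of circuit for non degenerated function} forces every equivalent $B$-circuit to have size at least $l+1>\lSIZE{s}(H_1)=k$, whence $\pair Gk\notin\lMEE{F/C}{s}(B)$. This yields $H_1\equiv H_2\iff\pair Gk\in\lMEE{F/C}{s}(B)$, and the reduction is computable in polynomial time since $l$ is polynomial and $W$ has polynomial size. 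The literal-size version $\lMEE{F/C}{l}(B)$ is identical, except that one bounds $k$ by $\lSIZE{l}(H_1)$ and uses the trivial fact that a formula with $N$ relevant variables contains at least $N$ literal occurrences, so $N>\lSIZE{l}(H_1)$ already supplies the gap. The main obstacle is the blow-up in the inequivalent case: unlike in the previous theorem we may not use a conjunction of the $z_i$, and the crux is that a single self-dual monotone gadget $W$ can simultaneously be made to collapse to $H_1$ (all $z_i$ vanishing) when $H_1\equiv H_2$ and to expose all of its variables when $H_1\not\equiv H_2$, both delivered by the two majority identities above, while staying polynomial and keeping the two cases on opposite sides of the threshold $k$.
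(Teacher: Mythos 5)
Your proof is correct, and although it follows the paper's overall strategy --- reduce from the \CONP-complete equivalence problem for $B$-formulas and manufacture a size gap by making many fresh variables relevant exactly when $H_1\not\equiv H_2$, then invoke Proposition~\ref{proposition:minimal size of circuit for non degenerated function} --- your gadget is genuinely different and simpler. The paper carries two auxiliary variables $f$ and $t$, uses $\mathit{maj}(\cdot,\cdot,f)$ and $\mathit{maj}(\cdot,\cdot,t)$ as stand-ins for $\wedge$ and $\vee$, builds a conjunction $\bigwedge_i z_i$ out of them, and then needs a four-way case analysis over the assignments to $f,t$ to show its compound formula collapses correctly. You bypass all of that with the two absorption identities $\mathit{maj}(a,a,c)\equiv a$ and $\mathit{maj}(1,0,c)\equiv c$: the single clause $G=f_{\mathit{maj}}(H_1,H_2,W)$ collapses to $H_1$ in the equivalent case (giving the witness $H_1$ itself with threshold $k=\lSIZE{s}(H_1)$) and exposes the blow-up gadget $W$ otherwise, and a majority tree over the $z_i$ replaces the conjunction, whose only role in the paper is likewise to make all $z_i$ relevant. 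The one point that must be stated carefully is the construction of $W$: a literally linear nesting $f_{\mathit{maj}}(z_1,z_2,f_{\mathit{maj}}(z_3,z_4,\dots))$ can blow up exponentially if the third argument of the fixed $B$-formula $f_{\mathit{maj}}$ occurs in it more than once, so you really do need the balanced, logarithmic-depth tree you mention parenthetically (the same device the paper applies to its conjunction), together with the observation that every subtree of such a tree is self-dual and hence non-constant, so the two siblings of any leaf can be pinned to $0$ and $1$ and the leaf's value passed to the root, making every $z_i$ relevant. With that spelled out, both measures $s$ and $l$ and both the formula and circuit cases go through exactly as you describe, and your argument is a clean simplification of the paper's.
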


\begin{proof}
 We show that the equivalence problem for $B$-formulas, which is \CONP-complete due to Theorem~4.17 of~\cite{rei01}, reduces to $\lMEE{F/C}{l/s}(B)$. Again, let $m$ denote the maximal arity of a function in $B$. Since $\mathit{maj}\in\clone B$, there is a $B$-formula $f_{\mathop{maj}}$ such that $f(x,y,z)$ is equivalent to $(x\wedge y)\vee(x\wedge z)\vee(x\wedge z)$. We note that $f_{\mathop{maj}} (x,y,0)\equiv x\wedge y$, and $f_{\mathop{maj}}(x,y,1)\equiv x\vee y$. To increase readability, we also use the symbols $E$ and $V$ for $f_{\mathop{maj}}$ when the last argument is assigned $0$ or $1$, respectively. It follows that $E(x,y,0)\equiv x\wedge y$ and $V(x,y,1)\equiv x\vee y$. We first consider $\lMEE{F/C}{s}(B)$. Hence, let $H_1$ and $H_2$ be $B$-formulas. We construct the following:

 \begin{itemize}
  \item Let $l=\lSIZE{s}(V(f,E(H_1,H_2,f),t))$, where $f$ and $t$ are new variables. Then $l>1$.
  \item let $E^*$ be a formula with variables $z_1,\dots,z_{m\cdot l},f$, such that $E^*(z_1,\dots,z_{m\cdot l},0)=\wedge_{i=1}^{m\cdot l} z_i$,
  \item let $H$ be the formula
    $$V(V(f,E(H_1,H_2,f),t),E(E(t,V(H_1,H_2,t),f),E^*,f),t).$$
 \end{itemize}

 Obviously, $l$ is polynomial in the input and the formula $E^*$ can be computed as follows: Construct the formula $\wedge_{i=1}^{m\cdot l} z_i$ as a $\wedge$-tree of logarithmic depth, and substitute each $\wedge$ with its implementation using $f_{\mathop{maj}}$ and $f$. Then the representation of $E^*$ can be computed in polynomial time.

 We claim that $H_1\equiv H_2$ if and only if $\pair Hl\in\lMEE{F/C}{l/s}(B)$. We consider all possible truth assignments for $f$ and $t$:

 \begin{description}
  \item[If $t=f=0$,]         then $V(f,E(H_1,H_2,f),t)\equiv 0$,                and $E(t,V(H_1,H_2,t),f)\equiv 0$.
  \item[If $t=f=1$,]         then $V(f,E(H_1,H_2,f),t)\equiv 1$,                and $E(t,V(H_1,H_2,t),f)\equiv 1$.
  \item[If $f=0$ and $t=1$,] then $V(f,E(H_1,H_2,f),t)\equiv H_1\wedge H_2$, \\ and $E(t,V(H_1,H_2,t),f)\equiv H_1\vee H_2$.
  \item[If $f=1$ and $t=0$,] then $V(f,E(H_1,H_2,f),t)\equiv H_1\vee H_2$,   \\ and $E(t,V(H_1,H_2,t),f)\equiv H_1\wedge H_2$.
 \end{description}

 In all cases we obtain that if $H_1\equiv H_2$, then $H\equiv V(f,E(H_1,H_2,f),t)$.

 First assume that $H_1\equiv H_2$. In this case, from the definition of $l$ it follows that there is a $B$-formula equivalent to $H$ with size at most $l$.

 Now assume that $H_1\not\equiv H_2$, then there is an assignment $\alpha$ such that, without loss of generality, $\alpha$ satisfies $H_1$ and does not satisfy $H_2$. We extend $\alpha$ with $\alpha(f)=0$ and $\alpha(t)=1$. We then have that $H[\alpha]$ (again, this is $H$ with the values for $\alpha$ hard-coded into it, which is not necessarily a $B$-circuit) is equivalent to $\wedge_{i=1}^{m\cdot l}z_i$, and therefore every $z_i$ is a relevant variable in $H$. Therefore, Proposition~\ref{proposition:minimal size of circuit for non degenerated function} implies that every $B$-circuit equivalent to $H$ has size at most $l+1$.

 The proof for $\lMEE{F/C}{l}(B)$ is identical, except in this case we choose $l$ as the number of literals in $V(f,E(H_1,H_2,f),t)$, and consider a conjunction of $l+1$ variables $z_i$. In the positive case, the equivalent formula $V(f,E(H_1,H_2,f),t)$ has $l$ literals, in the negative case, any formula equivalent to $\bigwedge_{i=1}^{l+1}z_i$ needs to have at least $l+1$ literals.
\end{proof}

\subsection{Classification Theorem}

From the structure of Post's lattice~\cite{pos41} (see~\cite{bcrv03} for a summary), it follows that if $B$ is a finite set of Boolean functions that contains functions $f$, $g$, and $h$ such that $f$ that is not an OR-function, $g$ is not an AND-function, and $h$ is not an XOR-function, then one of the following is true:
\begin{enumerate}
 \item $\wedge\in\clone B$ and $\vee\in\clone{B\cup\set 1}$, or 
 \item $\mathit{maj}\in\clone B$, where $\mathit{maj}$ is the ternary majority function.
\end{enumerate}
 
In both of these cases, the above two theorems imply that the minimization problem is \CONP-hard. Hence, the problem is \CONP-hard for all cases except those covered by our polynomial-time results in Section~\ref{section:lewis:ptime}. We therefore obtain the following full classification:

\begin{corollary}\label{corollary:lewis:complete characterization}
 Let $B$ be a finite set of Boolean functions.
 \begin{itemize}
  \item If $B$ contains only OR-functions, only AND-functions, or only XOR-functions, then $\lMEE{F/C}{l/s}(B)$ can be solved in polynomial time.
  \item Otherwise, $\lMEE{F/C}{l/s}(B)$ is \CONP-hard.
 \end{itemize}
\end{corollary}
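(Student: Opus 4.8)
The plan is to read the corollary as the assembly of three already-established results: the polynomial-time algorithm of Theorem~\ref{theorem:lewis:ptime cases}, the two \CONP-hardness theorems just proved, and the structural dichotomy of Post's lattice recorded above. The first item of the corollary is nothing but Theorem~\ref{theorem:lewis:ptime cases}, so no work remains there; the entire content lies in the ``otherwise'' case.

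For that case, I would first rewrite the hypothesis in clone-theoretic terms. Since the OR-functions are exactly the clone $\cV$, the AND-functions exactly $\cE$, and the XOR-functions exactly $\cL$, the statement ``$B$ does not consist only of OR-functions, only of AND-functions, or only of XOR-functions'' is equivalent to the three simultaneous non-containments $B\not\subseteq\cV$, $B\not\subseteq\cE$, and $B\not\subseteq\cL$. Unwinding each containment, this says precisely that $B$ contains some $f$ that is not an OR-function, some $g$ that is not an AND-function, and some $h$ that is not an XOR-function (these need not be distinct). This is exactly the hypothesis under which the structural fact above applies, so I may conclude that either (1) $\wedge\in\clone B$ and $\vee\in\clone{B\cup\set 1}$, or (2) $\mathit{maj}\in\clone B$. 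In case (1) the \CONP-hardness is the content of the theorem treating the $\wedge/\vee$ case, and in case (2) it is the content of the theorem treating the $\mathit{maj}$ case; in either case $\lMEE{F/C}{l/s}(B)$ is \CONP-hard, which finishes the dichotomy.

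The main obstacle is the structural fact itself---that the three non-containments force case (1) or case (2)---since everything else is bookkeeping. I would discharge it by a finite inspection of Post's lattice: the clones lying below $\cV$, below $\cE$, or below $\cL$ are exactly the tractable ones, and I would verify that every remaining clone lies above the self-dual monotone clone $\clone{\set{\mathit{maj}}}$, yielding case (2), or else contains $\wedge$ and recovers $\vee$ once the constant $1$ is adjoined, yielding case (1). I would also note that in both cases $B$ must contain an at-least-binary function, since neither $\wedge$ nor $\mathit{maj}$ can be built from unary and nullary functions alone; this guarantees that the use of Proposition~\ref{proposition:minimal size of circuit for non degenerated function} inside the two hardness theorems is legitimate, though because I invoke those theorems as black boxes this is already subsumed by their statements.
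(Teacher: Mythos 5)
Your proposal is correct and follows essentially the same route as the paper: item one is Theorem~\ref{theorem:lewis:ptime cases} verbatim, and the ``otherwise'' case is obtained by translating the hypothesis into the three non-containments in $\cV$, $\cE$, $\cL$, invoking the Post-lattice dichotomy into the $\wedge/\vee$ case and the $\mathit{maj}$ case, and applying the two corresponding \CONP-hardness theorems. The paper likewise treats the structural fact as a consequence of the known structure of Post's lattice rather than proving it from scratch, so no substantive difference remains.
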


\section{Minimization in the CNF framework}

Constraint formulas are CNF-formulas, where the set of allowed types of clauses is defined in a \emph{constraint language} $\Gamma$, which is a finite set of non-empty finitary Boolean relations. A $\Gamma$-clause is of the form $R(x_1,\dots,x_n)$, where $R$ is an $n$-ary relation from $\Gamma$, and $x_1,\dots,x_n$ are variables. A \emph{$\Gamma$-formula} is a conjunction of $\Gamma$-clauses, it is satisfied by an assignment $\alpha$, if for every clause $R(x_1,\dots,x_n)$ in $\varphi$, we have that $(\alpha(x_1),\dots,\alpha(x_n))\in R$. A relation $R$ is \emph{expressed} by a formula if the tuples in the relation are exactly the solutions of the formula (assuming a canonical order on the variables). We denote the satisfiability problem for $\Gamma$-formulas with $\sat{\Gamma}$. We often identify a relation and the formula expressing it. For a constraint language $\Gamma$, we define the constraint language $\overline\Gamma$, which is obtained from $\Gamma$ by exchanging $0$ and $1$ in every relation in $\Gamma$. This language is also called the \emph{dual} of $\Gamma$.

A natural way to measure the size of a CNF formula is the number of clauses---for a fixed language $\Gamma$, this is linearly related to the number of variable occurrences. We thus consider the following problem:

\decisionproblem{$\MEE{\Gamma}$}{A $\Gamma$-formula $\varphi$, an integer $k$}{Is there a $\Gamma$-formula $\psi$ with at most $k$ clauses and $\psi\equiv\varphi$?}

First note that obviously, the duality between $\Gamma$ and $\overline\Gamma$ directly results in these languages leading to the same complexity:

\begin{proposition}\label{prop:duality works}
  Let $\Gamma$ be a constraint language. Then $\MEE\Gamma\eqlogm\MEE{\overline\Gamma}$.
\end{proposition}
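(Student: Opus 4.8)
The plan is to exploit the fact that the dual operation on relations, which swaps $0$ and $1$ in every tuple of every relation in $\Gamma$, is an involution that commutes with the formula-building process. First I would establish the core observation: if $\varphi$ is a $\Gamma$-formula, then replacing each clause $R(x_1,\dots,x_n)$ by $\overline R(x_1,\dots,x_n)$ yields an $\overline\Gamma$-formula $\overline\varphi$ whose satisfying assignments are exactly the complements of the satisfying assignments of $\varphi$; that is, $\alpha\models\varphi$ if and only if $\overline\alpha\models\overline\varphi$, where $\overline\alpha(x)=1-\alpha(x)$. This is immediate from the definition of $\overline\Gamma$ and of satisfaction for constraint formulas, since $(\alpha(x_1),\dots,\alpha(x_n))\in R$ exactly when $(\overline\alpha(x_1),\dots,\overline\alpha(x_n))\in\overline R$.

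The key structural consequence is that the map $\varphi\mapsto\overline\varphi$ is a bijection between $\Gamma$-formulas and $\overline\Gamma$-formulas that preserves the number of clauses and preserves (and reflects) equivalence. Concretely, $\varphi_1\equiv\varphi_2$ holds if and only if $\overline{\varphi_1}\equiv\overline{\varphi_2}$, because two formulas agree on all assignments exactly when their duals agree on all assignments (complementation being a bijection on the set of all assignments). Moreover $\overline\varphi$ has exactly as many clauses as $\varphi$, since the transformation replaces each clause by a single clause of the same arity.

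With this in hand, the reduction itself is trivial: given an instance $\pair\varphi k$ of $\MEE\Gamma$, I would map it to $\pair{\overline\varphi}k$, an instance of $\MEE{\overline\Gamma}$. Correctness follows because a $\Gamma$-formula $\psi$ with at most $k$ clauses equivalent to $\varphi$ exists if and only if its dual $\overline\psi$ is an $\overline\Gamma$-formula with at most $k$ clauses equivalent to $\overline\varphi$; the two facts established above give both directions of this equivalence. The dual transformation is clearly computable in logarithmic space, as it processes each clause independently by looking up the dual relation (a constant-size operation since $\Gamma$ is fixed) and copying the variable list. This shows $\MEE\Gamma\redlogm\MEE{\overline\Gamma}$, and by symmetry (the dual of $\overline\Gamma$ is $\Gamma$ again, as dualization is an involution) the reduction in the reverse direction is identical, yielding $\MEE\Gamma\eqlogm\MEE{\overline\Gamma}$.

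There is essentially no hard obstacle here; the only point requiring a moment's care is verifying that equivalence is genuinely preserved in both directions rather than merely one, which hinges on complementation being a bijection on assignments so that a disagreement witness for $\varphi_1,\varphi_2$ transfers to a disagreement witness for $\overline{\varphi_1},\overline{\varphi_2}$ and vice versa. Everything else is a direct unfolding of definitions, and the main work is simply stating the clause-by-clause dual transformation cleanly and noting its logspace computability.
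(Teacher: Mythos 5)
Your proof is correct and is precisely the duality argument the paper intends (the paper states this proposition without proof, calling it obvious, just as it does for the analogous Proposition~\ref{prop:lewis duality for mee} in the Post setting). The clause-preserving bijection $\varphi\mapsto\overline\varphi$ together with complementation of assignments is exactly the right, standard route, and your attention to preservation of equivalence in both directions is the only point that needed checking.
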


To state our classification, we recall relevant properties of Boolean relations (for more background on these properties and how they relate to complexity classifications of constraint-related problems, see e.g.,~\cite{crkhsu01}). 

\begin{enumerate}
  \item A relation is \emph{affine} if it can be expressed by a $\{x,\overline{x},x_1\oplus\dots\oplus x_n,\neg(x_1\oplus\dots\oplus x_n)\ \vert\ n\in\mathbb N\}$-formula.
  \item A relation is \emph{bijunctive} if it can be expressed by a $\Gamma^2$-formula, where $\Gamma^2$ is the set of binary Boolean relations.
  \item A relation is \emph{Horn} if it can be expressed by a $\{x,\overline{x},(x_1\wedge\dots\wedge x_n\rightarrow y),(\overline{x_1\wedge\dots\wedge x_n})\ \vert\ n\in\mathbb N\}$-formula.
  \item A relation is \emph{positive Horn} if it can be expressed by a $\{x_1\wedge\dots\wedge x_n\rightarrow y\ \vert\ n\in\mathbb N\}$-formula.
  \item A relation is \emph{IHSB$+$} if it can be expressed by a $\{x,\overline x,x\rightarrow y,(x_1\vee\dots\vee x_n)\ \vert\ n\in\mathbb N\}$-formula.
\end{enumerate}
A constraint language $\Gamma$ is affine, bijunctive, IHSB$+$, or (positive) Horn if every relation in $\Gamma$ has this property. $\Gamma$ is \emph{dual (positive) Horn} if $\overline\Gamma$ is (positive) Horn, and \emph{IHSB$-$} if $\overline\Gamma$ is IHSB$+$. Note that IHSB$+$ implies dual Horn, and IHSB$-$ implies Horn. Additionally, $\Gamma$ is \emph{Schaefer} if it is affine, bijunctive, Horn, or dual Horn. This property implies tractability of many problems for Boolean constraint languages, including satisfiability \cite{sch78}, equivalence \cite{boherevo02} and enumeration \cite{crhe97}. For the latter two, the Schaefer property is necessary for tractability, unless $\PTIME=\NP$.

\subsection{Irreducible Relations}\label{sect:irreducible relations}

For many problems in the constraint context, it can be shown that if two constraint languages $\Gamma_1$ and $\Gamma_2$ have the same ``expressive power'' (with regard to different notions of expressibility), then the problems for $\Gamma_1$ and $\Gamma_2$ have the same complexity. A lot of work has been done on categorizing relations with regard to their expressive power, which is related to certain algebraic closure properties of the involved relations. For a discussion of the relationship between different notions of expressiveness, see~\cite{SchnoorSchnoor-Galois-LNCS-2008}. One of the strictest notions of ``expressive power'' is the following: We say that constraint languages $\Gamma_1$ and $\Gamma_2$ have the same expressive power if and only if every relation in $\Gamma_1$ can be expressed by a $\Gamma_2$-formula and vice versa. This notion of expressiveness has been studied in \cite{crkoza07:plain-bases}. It arises naturally in many complexity considerations for constraint-related problems: If two constraint languages have the same expressive power, then one can easily show that formulas can be ``translated'' from one language to the other with little computational cost. Hence it is natural that the complexity for all computational problems where the answer remains invariant if input formulas are exchanged for equivalent ones will then be the same---this includes satisfiability, equivalence, enumeration, and many other problems that have been considered. However, the minimization problem behaves differently: When translating formulas between different constraint languages, the number of clauses does not remain invariant. Moreover, even for constraint languages $\Gamma_1$ and $\Gamma_2$ with the same expressive power, expressing the same relation can possibly be done more efficiently using the language $\Gamma_1$ than using $\Gamma_2$. Therefore an easy proof showing that languages with the same expressive power lead to minimization problems with the same complexities cannot be expected. In fact, we show that the statement is not even true, by exhibiting constraint languages which have the same expressive power, yet having different complexities of the minimization problem. However, our complexity classification obtained later still heavily relies on the characterization of Boolean relations along the above lines.

\begin{example}\label{ex:galois does not work}\upshape
Let $\Gamma_1:=\set{x\vee y}$ and $\Gamma_2:=\{(x\vee y),(x\vee (y\wedge z)),(x\vee(y\wedge z\wedge w))\}$. Then obviously, every $\Gamma_2$-formula can be rewritten as a $\Gamma_1$-formula and vice versa, using the equivalence $y\vee(x_1\wedge\dots\wedge x_n)\equiv (y\vee x_1)\wedge\dots\wedge (y\vee x_n)$. However, while the problem $\MEE{\Gamma_1}$ can obviously be solved in polynomial time, the problem $\MEE{\Gamma_2}$ is \NP-hard. This follows from a reduction similar in flavor to the one used in the proof of NP-hardness of MEE in \cite{hewe02}, reducing from the Vertex Cover for cubic graphs problem: A cubic graph $G = (V,E)$ has a vertex cover of size $k$ if and only if the formula $\bigwedge_{\{i,j\} \in E} x_i \vee x_j$ has an equivalent $\Gamma_2$-formula with $k$ clauses.
\end{example}

Therefore, unlike all other problems in the constraint context that we mentioned, the complexity of the minimization problem is not determined by the expressive power of a constraint language. However, the problems that we need to solve in order to minimize $\Gamma_2$-formulas are combinatorial in nature, and do not stem from the difficulty of determining a ``minimum representation'' of what these formulas actually express. Therefore the \NP-hardness is not related to the problem that we are interested in in minimization, namely to find a shortest equivalent formula, but from the difficulty of how to use the ``building blocks'' that we have efficiently. While this certainly is an interesting problem in its own right, in this paper we only study the complexity of the actual task of finding---not expressing---a minimum representation of a formula in the given constraint language.

Looking at the example given above, the problems obviously arise from the fact that $\Gamma_2$ contains ``combined'' relations which can be re-written into simpler clauses: the clause $(x\vee (y\wedge z))$ is equivalent to $(x\vee y)\wedge(x\vee z)$. An important feature in the study of constraint satisfaction problems is that they allow us to build formulas from ``local conditions,'' which are expressed in the individual clauses. The clause $(x\vee (y\wedge z))$ is in a way not ``as local as it can be,'' since it can be rewritten as the conjunction of two ``easier'' conditions. We define \emph{irreducible} relations as those that cannot be rewritten like this:

\begin{definition}
 An $n$-ary relation $R$ is \emph{irreducible}, if for every formula $R_1(x^1_1\dots,x^1_{k_1})\wedge\dots\wedge R_m(x^m_1,\dots,x^m_{k_m})$ (where each $R_i$ is a $k_i$-ary Boolean relation) which is equivalent to $R(x_1,\dots,x_n)$, one of the $R_i$-clauses has arity at least $n$. A constraint language $\Gamma$ is \emph{irreducible} if every relation in $\Gamma$ is.
\end{definition}

The intuition behind the definition is that a relation $R$ is irreducible if the question if some tuple $(\alpha_1,\dots,\alpha_n)$ belongs to $R$ cannot be answered by checking independent conditions which each only depend on a proper subset of the values, but all of the $\alpha_i$ have to be considered simultaneously. Note that the clause which has arity $\ge n$ can be assumed to contain every variable of $x_1,\dots,x_n$, since otherwise, a variable appears twice in the clause, which then could be rewritten with a relation of a smaller arity. Hence a relation is irreducible if and only if every formula equivalent to $R(x_1,\dots,x_n)$ has a clause that contains (at least) all of the variables $x_1,\dots,x_n$.

We mention that it is \emph{not} sufficient to replace, in the above definition of irreducibility, the ``has arity at least $n$'' with ``has arity $n$.'' In this case, no relation would be irreducible at all, since $R(x_1,\dots,x_n)$ can always be expressed with the $n+1$-ary term $R'(x_1,\dots,x_n,x_n)$, where $R'=\set{(x_1,\dots,x_n,y)\ \vert\ (x_1,\dots,x_n)\in R\mathtext{ and }x_n=y}$. We thank the anonymous reviewer of~\cite{Hemaspaandra-Schnoor-Minimization-IJCAI-2011-TOAPPEAR} for pointing out this issue.

Irreducibility is a rather natural condition---in fact, most relations usually considered in the constraint context meet this definition:

\begin{example}
 \begin{enumerate}
  \item Let $R$ be expressed by a disjunction of $n$ literals with $n$ distinct variables. Then $R$ is irreducible.
  \begin{proof}
   Let $R$ be expressed by $(l_1\vee\dots\vee l_n)$, where $l_i$ is either $x_i$ or $\overline{x_i}$ for $n$ distinct variables $x_1,\dots,x_n$. Let $R_1(x^1_1\dots,x^1_{k_1})\wedge\dots\wedge R_m(x^m_1,\dots,x^m_{k_m})$ be a formula equivalent to $R(x_1,\dots,x_n)$. We need to show that there is one clause $R_i(\dots)$ where each of the $x_i$ appears. Let $I$ be the assignment to the variables $x_1,\dots,x_n$ such that $I(x_i)=1$ if $l_i$ is the literal $\overline{x_i}$, and $I(x_i)=0$ if $l_i$ is the literal $x_i$. Then $I$ does not satisfy the clause $R(x_1,\dots,x_n)$. Therefore there must be a clause $R_j(x^j_1,\dots,x^j_{k_j})$ not satisfied by $I$. We show that each variable $x_i$ appears in this clause, which then completes the proof. Let $I'$ be the assignment agreeing with $I$ for all variables except for $x_i$. Then $I'$ satisfies $R(x_1,\dots,x_n)$, and hence satisfies $R_j(x^j_1,\dots,x^j_{k_j})$. Since $I$ and $I'$ only differ in the variable $x_i$, this variable must appear in $R_j(x^j_1,\dots,x^j_{k_j})$.
  \end{proof}
  \item Let $R$ be expressed by a clause $x_1\oplus\dots\oplus x_k=c$ for distinct variables $x_1,\dots,x_k$ and a constant $c\in\set{0,1}$. Then $R$ is irreducible.
  \begin{proof}
   Similar to the above: Let $\varphi$ be a conjunction of clauses equivalent to $R(x_1,\dots,x_k)$. Fix an assignment $I$ not satisfying $\varphi$. Then there must be a clause in $\varphi$ not satisfied by $I$, but changing the truth value of any of the variables makes the formula (and hence this clause) satisfied, therefore every variable appears in the clause.
  \end{proof}
  \item Every relation appearing in the base-list given in \cite{crkoza07:plain-bases} is irreducible.
\end{enumerate}
\end{example}

The above list certainly is not exhaustive. Irreducible languages only allow ``atomic'' clauses that cannot be split up further. In practice, for example in the design of knowledge bases, irreducible languages are more likely to be used: They provide users with atomic constructs as a basis from which more complex expressions can be built. We have seen above that there are languages with equal expressive power (meaning that formulas can be easily rewritten from one of the languages to the other), but the irreducible one has an easier minimization problem than the non-irreducible one. We do not expect that an example exists for the converse, where the problem is easier for the reducible case than for the irreducible, for the reason discussed above: In an informal way, when considering the minimization problem for irreducible languages, it is sufficient to find some minimum representation for the formula. The task to express this formula using the a minimum number of clauses of the given constraint language is easy. As seen in Example~\ref{ex:galois does not work}, in the case of reducible languages, this second task can be \NP-hard in itself.

\subsection{Tractable Cases: Polynomial-Time algorithms}

We now prove polynomial-time results for a wide class of constraint languages. In fact, we prove the maximum of polynomial time results that can be expected: As mentioned before, the MEE problem for positive Horn formulas is \NP-complete~\cite{BC94}. We show the following result: For every irreducible constraint language that is Schaefer, and does not have all the expressive power of positive Horn (or dual positive Horn), the minimization problem can be solved efficiently. This proves polynomial-time results in each case where such a result can be expected, since for non-Schaefer languages, even testing equivalence is \CONP-hard. Following well-known classification results about the structure of Boolean constraint languages, there are three cases to consider (ignoring the isomorphic cases arising to duality, see Proposition~\ref{prop:duality works}): The case where $\Gamma$ is affine, bijunctive, or IHSB$+$. For each of these cases, we prove that the minimization problem can be solved efficiently. The most interesting and involved case is for constraint languages that are IHSB$+$. The bijunctive case is a simpler version of the IHSB$+$-case, the affine case is tractable due to the fact that formulas involving affine constraint languages can be seen as linear equations, for which there are efficient algorithms. 

\subsubsection{IHSB$+$ and IHSB$-$ formulas}

We start our polynomial-time results with the most involved of these constructions, proving that irreducible constraint languages that are IHSB$+$ lead to an easy minimization problem (from Proposition~\ref{prop:duality works}, it follows that the problem is polynomial-time solvable for IHSB$-$ as well). We first prove the result for the basic case where the relations in our constraint language are restricted to the ones ``defining'' IHSB$+$, and later prove that this case already is general enough to cover \emph{all} irreducible languages that are IHSB$+$. Requiring irreducibility is necessary: The language $\Gamma_2$ discussed in Example~\ref{ex:galois does not work} is IHSB$+$ (in fact, considerably less expressive than IHSB$+$), but, as argued before, the minimization problem for $\Gamma_2$ is \NP-hard.

The main idea of the algorithm is the following: We rewrite formulas using multi-ary OR, implication, equality, and literals into conjunctions of, to a large degree, independent formulas, each containing only OR, implications, equalities, or literals. Each of these formulas then can be minimized locally with relatively easy algorithms. The main task that our algorithm performs is ``separating'' the components of the input formula in such a way that minimizing the mentioned sub-formulas locally is equivalent to minimizing the entire formula.

\begin{theorem}\label{theorem:constraint minimization ptime case}
 Let $\Gamma=\set{\rightarrow,=,x,\overline{x}}\cup\set{\textnormal{OR}^m\ \vert\ m\leq k}$ for some $k \in\mathbb N$. Then $\MEE\Gamma\in\PTIME$.
\end{theorem}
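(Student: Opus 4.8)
The plan is to exhibit a polynomial-time algorithm that computes a \emph{minimum} equivalent $\Gamma$-formula for the input; deciding $\MEE\Gamma$ then amounts to comparing the number of clauses it produces with the given budget. Throughout I would use that $\Gamma$ is Schaefer (in fact IHSB$+$), so that both $\sat\Gamma$ and the entailment test ``$\varphi\models C$'' for a single clause $C$ are solvable in polynomial time (the latter by fixing the variables of $C$ so as to falsify it and testing the resulting formula for unsatisfiability).

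The first and central step is to understand the prime implicates of the function $f$ computed by the input formula $\varphi$. Every $\Gamma$-clause --- a unit, a binary implication $\overline x\vee y$, or a positive disjunction of length at most $k$ --- contains \emph{at most one} negated variable, and resolving two such clauses again produces a clause with at most one negated variable; moreover resolving an implication against a positive disjunction of length at most $k$ yields a positive disjunction of length at most $k$. Hence, by completeness of resolution for prime implicates, \emph{every} prime implicate of $f$ is itself a $\Gamma$-clause: a unit, a binary implication, or a positive disjunction of length at most $k$. For fixed $k$ there are only polynomially many such candidate clauses, and each can be tested for being a prime implicate in polynomial time (entailment plus minimality among its sub-clauses). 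I would also record two clean-up facts: a variable that $f$ forces to a constant occurs only in its own unit prime implicate (any other clause containing it would be subsumed), and distinct prime implicates never subsume one another.

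Using this, I would assemble a candidate minimum formula from four almost-independent parts. (i) \emph{Units:} one literal clause for each variable forced by $f$. (ii) \emph{Equalities and implications:} on the non-forced variables consider the preorder $x\preceq y\iff\varphi\models(x\to y)$; for each class of mutually-implying variables of size $s$ emit a spanning tree of $s-1$ equality clauses, and on the resulting partial order (the quotient by these classes) emit the implication clauses of its transitive reduction. (iii) \emph{Disjunctions:} after identifying mutually-implying variables, order the positive prime implicates by the relation ``$C$ together with the implications entails $C'$'' (obtained by replacing variables along entailed implications); this is a partial order, and I would output exactly its minimal elements. Each emitted clause is an implicate of $f$, and together they entail every prime implicate of $f$, so the formula is equivalent to $f$; all steps run in polynomial time.

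It remains to prove that this construction is of minimum size, which is the heart of the argument and the step I expect to be hardest: one must show that \emph{local} minimization of the four parts coincides with \emph{global} minimization. The key separating observations are that a purely positive disjunction can never help entail a clause carrying a negated variable (resolution cannot introduce the missing negation), so units, equalities and implications must be realized without any help from disjunction clauses; and conversely that a positive prime implicate can be produced only by resolving \emph{some} disjunction clause against implications, so the non-derivable (minimal) positive prime implicates are each individually necessary. For the negated-variable part, forcing a variable needs its own unit, connecting $s$ mutually-implying variables needs at least $s-1$ clauses with equality being the only single clause giving two-way connectivity, and the cross-class implications need at least transitive-reduction--many clauses. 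The remaining obstacle is to package these necessity arguments into a single normal-form lemma stating that an arbitrary equivalent $\Gamma$-formula can be rearranged, without increasing its clause count, into the disjoint union of these four locally optimal components; this is exactly the ``separation'' the algorithm performs, and it rests on the resolution-based clause-type analysis together with the equality-versus-two-implications trade-off for mutually-implying variables.
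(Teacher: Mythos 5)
Your algorithmic half is sound, and it parallels the paper's decomposition via a different vehicle: the resolution argument showing that every prime implicate of a $\Gamma$-formula is again a unit, a binary implication, or a positive disjunction of length at most $k$ is correct; for fixed $k$ there are only polynomially many such candidates, each testable for entailment in polynomial time; and the assembled candidate (units, equality spanning trees, transitive reduction of the implication quotient order, minimal positive prime implicates) is equivalent to the input and computable in polynomial time. The paper reaches the same four-way separation through an explicit rewriting system and a ``leads to'' relation rather than through prime implicates.

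The gap is exactly where you locate it, and it is not a packaging issue: the minimality proof is the theorem. Moreover, two of the ``necessity'' claims you offer as its basis are false as statements about \emph{arbitrary} equivalent formulas. ``Forcing a variable needs its own unit'' fails for $(x\vee y)\wedge(x\rightarrow z)\wedge(y\rightarrow z)$, which forces $z$ but contains no unit clause for $z$; the claim holds only for suitably normalized formulas, and proving that normalization never increases the clause count is precisely the missing work (the delicate case being the introduction of a literal clause, which must be paid for by deleting an implication leading into it, or by the OR-clause it came from becoming redundant). Likewise, ``each minimal positive prime implicate is individually necessary'' does not by itself give the needed count: one OR-clause could in principle serve as the resolution source of several positive implicates, so you must exhibit an injection from the minimal positive prime implicates into the OR-clauses of an arbitrary equivalent formula. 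The paper closes both gaps with a single normal-form argument: a clause-count-non-increasing procedure $\min(\cdot)$, followed by a proof that for any two equivalent normalized formulas the literal, equality, implication, and OR components are pairwise equivalent --- with the OR component consisting of exactly the same clauses, namely the minimal implied OR-clauses over the OR-variables --- so that locally minimized components force equal total clause counts. Without a lemma of this shape, the optimality of your construction remains unproven.
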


\begin{proof}
 We first introduce some notation and facts about $\Gamma$-formulas: For variables $u$ and $v$, we write $u\leadsto_{\varphi} v$ ($u$ \emph{leads to} $v$ \emph{in} $\varphi$) if there is a directed path consisting of $\rightarrow$ and $=$-clauses in the formula $\varphi$ from $u$ to $v$. We often omit the formula and simply write $u\leadsto v$. Similarly, if there are OR-clauses $C_1=(x_1\vee\dots\vee x_n)$ and $C_2=(y_1\vee\dots\vee y_m)$, we write $C_1\leadsto C_2$ if every of the $x_i$ leads to one of the $y_j$. It is obvious that in this case, the conjunction of $C_1$ and the $\rightarrow/=$-clauses implies $C_2$. Note that since $x\leadsto x$ for all variables, it holds that $(x_1\vee x_2)\leadsto(x_1\vee x_2\vee x_3)$. In particular, a literal $x$ leads to a clause $(x\vee y\vee z)$.

 It is easy to see that a $\Gamma$-formula is unsatisfiable if and only if there is some OR-clause (we regard literals as $1$-ary OR-clause) $x_1\vee\dots\vee x_n$ such that for each of the $x_i$, there is a variable $z_i$ which occurs as a negative literal, and $x_i\leadsto z_i$ (otherwise, we can satisfy the formula by setting all variables to $1$ which do not imply negative literals). Since satisfiability for $\Gamma$-formulas can be tested in polynomial time~\cite{sch78}, we assume that all occurring formulas are satisfiable (otherwise in order to minimize we produce a minimum unsatisfiable $\Gamma$-formula, which is a fixed string). For any $\Gamma$-formula $\varphi$, let $\varphi_{\mathtext{OR}}$ denote the formula obtained from $\varphi$ by removing every clause that is not an OR-clause with at least $2$ variables. Similarly let $\varphi_{\rightarrow}$ be the conjunction of all implication-clauses in $\varphi$, $\varphi_{\mathtext{lit}}$ the literals in $\varphi$, and $\varphi_=$ the equality clauses.

 We now describe the minimization procedure. We use some canonical way of ordering variables and clauses (for example, the lexicographical ordering on the names) and repeat the following steps until no changes occur anymore:

\begin{algorithmic}[1]
\STATE{\textbf{Input}: $\Gamma$-formula $\varphi$}
\WHILE{changes still occur}
  \STATE{For a set of variables connected with $=$, only keep the minimal variable in non-equality clauses (by variable identification)}
  \IF{there exist OR-clauses $C_1\neq C_2$ with $C_1\leadsto C_2$,}
    \STATE If $C_2\leadsto C_1$, then remove the minimal of the two
    \STATE Otherwise, remove $C_2$
   \ENDIF
  \IF{there is clause $(x_1\vee\dots\vee x_n)$, variable $v$ with $x_i\leadsto v$ for all $i$,}
    \STATE\label{enumlabel:min-alg:introduce literals} {introduce clause $v$}
    \STATE{remove $\rightarrow$-clauses leading to $v$}
  \ENDIF
  \IF{literal $x$ occurs, $x\leadsto y$}
    \STATE{replace final clause in path with $y$}
  \ENDIF
  \IF{literal $\overline y$ occurs, $x\leadsto y$}
    \STATE\label{enumlabel:min-alg:variables leading to negative literals} replace first clause in path with $\overline{x}$
  \ENDIF
  \STATE\label{enumlabel:min-alg:remove negative literals from or} Remove variables occurring as negative literals from OR-clauses
  \IF{$(x_1\vee\dots\vee x_n)$ is clause, $x_i\leadsto x_j$ for $i\neq j$}
    \STATE{remove $x_i$ from the clause}
  \ENDIF
  \IF {there are variables such that $x_1\leadsto x_2,\dots, x_{n-1}\leadsto x_n,x_n\leadsto x_1$}
    \STATE{exchange implications between them with equalities.}
  \ENDIF
  \IF{$u$ ($\overline u$) appears as a literal}
    \STATE\label{enumlabel:min-alg:remove tautological implications} remove clauses of the form $(v\rightarrow u)$ ($(u\rightarrow v)$).
  \ENDIF
  \STATE Locally minimize $\varphi_=$ and $\varphi_{\rightarrow}$.
\ENDWHILE
\end{algorithmic}

Note that $\varphi_=$ and $\varphi_{\mathtext{lit}}$ can be minimized trivially, and $\varphi_{\rightarrow}$ can be minimized due to a result from~\cite{agu72:trans-reduction}, since finding a transitive reduction of a directed graph is exactly the problem of minimizing a formula in which only implications of positive literals appear.

For a $\Gamma$-formula $\varphi$, let $\min(\varphi)$ denote the result of this optimization procedure on input $\varphi$. It is obvious that $\min(\varphi)$ is equivalent to $\varphi$, and that the algorithm can be performed in polynomial time. It is also obvious that the number of clauses of $\min(\varphi)$ does not exceed the number of clauses of $\varphi$. This is clear from the definition of the algorithm except for step~\ref{enumlabel:min-alg:introduce literals}. In this case, the number of clauses could grow if there is no $\rightarrow$-clause that we can remove. However, in this case, all of the variables in the OR-clause are $=$-connected with $v$, and therefore the clause is equivalent to $v$ and can be removed. Therefore, the number of clauses in $\varphi$ does not increase when applying the algorithm.

The main idea of the algorithm is that it brings the formulas in a ``normal form,'' allowing us to minimize the components of the formula separately. The proof depends on the following claims:

\begin{prooffact}\label{prooffact:implied literals are in formula}
Let $\chi$ a satisfiable $\Gamma$-formula such that $\mathtext{min}(\chi)=\chi$ and $\chi$ implies $x$ ($\overline{x}$) for some variable $x$. Then $\chi_{\mathtext{lit}}$ implies $x$ ($\overline{x}$).
\end{prooffact}

\begin{proof}
First consider the case that $\chi$ implies $x$. Then the formula $\chi\wedge\overline x$ is not satisfiable. Therefore, there is an OR-clause $(x_1\vee\dots\vee x_n)$ such that each $x_i$ leads to a variable $z_i$ which appears as a negative literal. No $x_i$ can lead to a negative literal appearing in $\chi$, since such variables are removed from OR-clauses by the algorithm in steps~\ref{enumlabel:min-alg:variables leading to negative literals} and~\ref{enumlabel:min-alg:remove negative literals from or}. Hence every $x_i$ leads to $x$. Therefore, $x$ is present as a literal due to step~\ref{enumlabel:min-alg:introduce literals} of the minimization algorithm.

Now assume that $\chi$ implies $\overline x$. Then $\chi\wedge x$ is not satisfiable. Hence there must be an OR-clause where every appearing variable leads to a variable occurring as a negative literal. Since $\chi$ is satisfiable, this OR-clause must be the single literal $x$. Hence in $\chi$, $x$ leads to a variable $y$ occurring as a negative literal. By the construction of the minimization algorithm, $\overline x$ then also appears as a literal.
\end{proof}

The following facts are proven using similar arguments:

\begin{prooffact}\label{prooffact:implied implications not in or clauses}
 Let $\chi$ be a satisfiable $\Gamma$-formula such that $\mathtext{min}(\chi)=\chi$, and let $u,v$ be variables in $\Gamma$. such that $\chi$ implies $(u\rightarrow v)$. Then $\chi_{\rightarrow}\wedge\chi_{\mathrm{lit}}\wedge\chi_=$ implies $(u\rightarrow v)$.
\end{prooffact}

\begin{proof}
Assume that this is not the case. It then follows that $\chi\wedge u\wedge\overline v$ is not satisfiable, and $\chi_{\rightarrow}\wedge\chi_{\mathrm{lit}}\wedge\chi_=\wedge u\wedge\overline v$ is. Since the former is unsatisfiable, there is an OR-clause $(x_1\vee\dots\vee x_n)$ such that every $x_i$ leads to a variable occurring as a negative literal in $\chi\wedge u\wedge\overline v$. First assume that this OR-clause is the literal $u$. If $u$ would lead to a variable occurring as a negative literal which is not the variable $v$, then by construction of the algorithm, $\overline u$ would be a literal in $\chi$, a contradiction, since $\chi_{\mathtext{lit}}\wedge u$ is satisfiable. Therefore, we have that $u\leadsto v$, and the claim follows.

Now assume that the OR-clause is not the variable $u$. Since variables leading to negative literals are removed from OR-clauses by the minimization algorithm, all of the $x_i$ lead to $v$. By construction of the algorithm, a literal $v$ is then introduced in $\chi$. This is a contradiction, since $\chi_{\rightarrow}\wedge\chi_{\mathrm{lit}}\wedge\chi_=\wedge u\wedge\overline v$ is satisfiable.
\end{proof}

\begin{prooffact}\label{prooffact:implies equalities in equality clauses}
 Let $\chi$ be a satisfiable $\Gamma$-formula such that $\mathtext{min}(\chi)=\chi$, $\chi$ implies $x=y$, and $\chi$ does not imply $x$ or $\overline x$. Then $\chi_=$ implies $x=y$.
\end{prooffact}

\begin{proof}
From Fact~\ref{prooffact:implied implications not in or clauses}, we know that $\chi_{\mathtext{lit}}\wedge\chi_{\rightarrow}\wedge\chi_=$ implies $(u\rightarrow v)$ and $(v\rightarrow u)$. Therefore, since none of these variables appear as literals, we know that $u\leadsto v$ and $v\leadsto u$. Therefore, $=$-clauses connecting $u$ and $v$ are introduced by the algorithm.
\end{proof}

After establishing these initial facts about the algorithm, we now prove that it is correct, i.e., that $\min(\varphi)$ has a minimal number of clauses among all $\Gamma$-formulas equivalent to $\varphi$. To prove this, let $\psi$ be a formula such that $\psi\equiv\varphi$. We show that $\card{\min(\varphi)}\leq\card\psi$. Since $\card{\min(\psi)}\leq\card\psi$, it suffices to show that $\card{\min(\varphi)}\leq\card{\min(\psi)}$. Hence it suffices to prove that for equivalent formulas $\varphi$ and $\psi$ such that $\min(\varphi)=\varphi$ and $\min(\psi)=\psi$, it follows that $\card{\varphi}\leq\card\psi$.

The main strategy of the remainder of the proof is to show that the above algorithm performs a ``separation'' of the formula in components containing the ``$\rightarrow$-part,'' the ``literal part'' and the ``$=$-part,'' which is, in a sense, uniquely determined: For the two equivalent formulas $\psi$ and $\varphi$, the obtained parts are not necessarily identical, but they are equivalent. This is the main reason why it is sufficient to only minimize these ``components'' in our algorithm.

\begin{prooffact}
 $\mathtext{min}(\psi)_{\rightarrow}\equiv\mathtext{min}(\varphi)_{\rightarrow}$.
\end{prooffact}

\begin{proof}
 Let $(u\rightarrow v)$ be a clause in $\mathtext{min}(\psi)_{\rightarrow}$. Then we know that $\varphi$ does not imply one of $u,\overline{u},v,\overline{v}$: Assume that this is the case. From Fact~\ref{prooffact:implied literals are in formula}, we then know that the corresponding literal appears in the formula itself. In the case that $u$ appears, the clause $(u\rightarrow v)$ would have been deleted by the algorithm, and replaced with the literal $v$. In the case that $\overline{v}$ appears, the clause is replaced with $\overline u$. If $\overline u$ occurs, or $v$ occurs, then the clause $(u\rightarrow v)$ is tautological and has been removed by the algorithm in step~\ref{enumlabel:min-alg:remove tautological implications}. From Fact~\ref{prooffact:implied implications not in or clauses}, we know that (since $\mathtext{min}(\varphi),\mathtext{min}(\psi),\varphi$, and $\psi$ are all equivalent), that $\varphi_{\mathtext{lit}}\wedge\varphi_{\rightarrow}\wedge\varphi_=$ implies $(u\rightarrow v)$. Due to the above, since $u$ and $v$ do not appear as literals and only one variable for each connected $=$-component appears in the rest of the formula, we know that $\varphi_{\rightarrow}$ implies $(u\rightarrow v)$. Therefore, $\varphi_{\rightarrow}$ implies every clause in $\psi_{\rightarrow}$, and hence $\varphi_{\rightarrow}$ implies $\psi_{\rightarrow}$. Due to symmetry, it follows that these formulas are equivalent.
\end{proof}

The following claims follow in a similar way:

\begin{prooffact}
 $\mathtext{min}(\psi)_{\mathtext{lit}}\equiv\mathtext{min}(\varphi)_{\mathtext{lit}}$.
\end{prooffact}

\begin{proof}
 This is obvious from Fact~\ref{prooffact:implied literals are in formula}, since literals appear as literal clauses if and only if they are implied by the formula, and the formulas are equivalent.
\end{proof}

\begin{prooffact}
 $\mathtext{min}(\psi)_=\equiv\mathtext{min}(\varphi)_=$
\end{prooffact}

\begin{proof}
We know from Fact~\ref{prooffact:implied literals are in formula} that a variable $x$ such that $\psi$ implies $x$ or $\overline x$ appears as a positive or negative literal, and a variable appearing as a literal does not appear in $\psi_=$. Let $(u=v)$ be a clause in $\psi_=$. Then $\varphi$ implies $(u\rightarrow v)$ and $(v\rightarrow u)$. Since both do not appear as literals, Fact~\ref{prooffact:implied implications not in or clauses} then implies that $u\leadsto_{\varphi} v$ and $v\leadsto_{\varphi} u$. Therefore, equality clauses between them have been introduced in $\varphi$, and hence $\varphi_=$ implies $(u=v)$. Thus $\varphi_=$ implies $\psi_=$, and due to symmetry, they are equivalent.
\end{proof}

It remains to deal with the OR-components: We want to show that $\varphi_{\mathtext{OR}}$ and $\psi_{\mathtext{OR}}$ are equivalent as well. To show this requires a bit more work. Let $\mathcal C$ be the set of OR-clauses which follow from $\varphi$, and which only contain variables occurring in $\varphi_{\mathtext{OR}}$ (note that we do not have to construct this (potentially exponential) set in the algorithm).

\begin{prooffact}\label{prooffact:minimal clauses appear}
 Let $C$ be a $\leadsto$-minimal clause in $\mathcal C$. Then $C$ appears in $\min(\varphi)$ and in $\min(\psi)$.
\end{prooffact}

\begin{proof}
Let $C=(x_1\vee\dots\vee x_n)$. Since $\varphi$ implies $C$, we know that $\varphi\wedge\overline{x_1}\wedge\dots\wedge\overline{x_n}$ is unsatisfiable. Due to the remarks at the beginning of the proof, this means that there is a clause $B=(y_1\vee\dots\vee y_m)$ such that each of the $y_i$ leads (in $\varphi$) to a variable occurring as a negative literal. Since variables leading to negative literals are removed from OR-clauses by the algorithm, we know that each of the $y_i$ leads to one of the $x_j$. Therefore, $B\leadsto C$. Since $C$ is $\leadsto$-minimal, we know that $C\leadsto B$ holds as well. 

It remains to show that $B$ and $C$ contain the same variables. Assume that there is some variable $x_i$ which does not appear in $B$. Since $B\leadsto C$ and $C\leadsto B$, we know that $x_i$ leads to some variable $y_j$, and that $y_j$ leads to some $x_k$, which in turn leads to some $y_l$. Since $\leadsto$ is transitive, it follows that $y_j\leadsto y_l$. If $y_j$ and $y_l$ would be different variables, then $y_j$ would have been removed from $B$ by the algorithm. Therefore we know that $y_j$ and $y_l$ are the same variables. Since $y_l\leadsto x_k\leadsto y_l$, we know that $\varphi$ implies $(y_l\rightarrow x_k)$ and $(x_k\rightarrow y_l)$, and hence $\varphi$ implies $x_k=y_l$. Since these variables appear in $\varphi_{\mathtext{OR}}$, we know by construction that none of them appears as a literal, and thus, from Fact~\ref{prooffact:implied literals are in formula}, know that neither $x_k,y_l,\overline{x_k}$ or $\overline{y_l}$ are implied by $\varphi$. From Fact~\ref{prooffact:implies equalities in equality clauses}, we therefore know that $\varphi_=$ implies $x_k=y_l$. By construction, only the lexicographically minimal of these two variables appears in $\varphi_{\mathtext{OR}}$, and since both appear, it follows that they are the same variable. This is a contradiction to the assumption that $x_i$ does not appear in $B$. Similarly, we can show that every variable from $B$ appears in $C$.

Since $\psi=\mathtext{min}(\psi)$, and $\varphi$ and $\psi$ are equivalent, the same argument can be used to show that the clause appears in $\psi$.
\end{proof}

We now show the converse of the above fact:

\begin{prooffact}\label{prooffact:all appearing clauses are minimal}
 Let $C$ be a clause appearing in $\mathtext{min}(\varphi)$. Then $C$ is minimal in $\mathcal C$ with respect to $\leadsto$.
\end{prooffact}

\begin{proof}
Assume that this is not the case. Since $\mathcal C$ is a finite set, this implies that there is a minimal clause $B$ in $\mathcal C$ such that $B\leadsto C$. Due to Fact~\ref{prooffact:minimal clauses appear}, we know that $B$ appears in $\varphi_{\mathtext{OR}}$. Since $B\leadsto C$ and $C\not\leadsto B$ (since $C$ is not minimal), $C$ is removed from $\varphi$ by the algorithm, a contradiction.
\end{proof}

Therefore we know that the clauses appearing in $\varphi_{\mathtext{OR}}$ are exactly the minimal clauses in $\mathcal C$, and from Fact~\ref{prooffact:minimal clauses appear}, we know that each of these also appears in a $\psi_{\mathtext{OR}}$. Therefore, the number of OR-clauses in $\varphi_{\mathtext{OR}}$ is bounded by the number of OR-clauses in $\psi_{\mathtext{OR}}$. Due to symmetry, they are equal. Since the components containing literals, equalities and implications have been minimized independently, the number of clauses in $\varphi$ and $\psi$ is equal, which concludes the proof of Theorem~\ref{theorem:constraint minimization ptime case}.
\end{proof}

A careful analysis of the proof yields that it also holds true if $\Gamma$ does not contain all the relations defining IHSB$+$, even though in these cases, only a restricted vocabulary is available for the minimum formula.

\begin{corollary}\label{corollary:ptime for subsets of constraint language}
 Let $\Gamma\subset\set{\rightarrow,=,x,\overline{x}}\cup\set{\textnormal{OR}^m\ \vert\ m\in M}$ for some finite set $M\subseteq\mathbb N$. Then $\MEE\Gamma\in\PTIME$.
\end{corollary}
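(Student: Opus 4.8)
The plan is to show that Corollary~\ref{corollary:ptime for subsets of constraint language} follows from Theorem~\ref{theorem:constraint minimization ptime case} by a careful inspection of how the polynomial-time algorithm and its correctness proof behave when only a restricted vocabulary is available. The key observation is that the minimization algorithm in the proof of Theorem~\ref{theorem:constraint minimization ptime case} never \emph{introduces} clause types that were not already present in a useful sense: it identifies variables, removes and shortens OR-clauses, introduces literals (step~\ref{enumlabel:min-alg:introduce literals}), replaces endpoints of implication paths with literals, converts implication cycles into equalities, and locally minimizes $\varphi_=$ and $\varphi_\rightarrow$. So the first step is to argue that when $\Gamma$ is a strict subset of $\set{\rightarrow,=,x,\overline x}\cup\set{\textnormal{OR}^m\mid m\in M}$, running the same algorithm on a $\Gamma$-formula $\varphi$ produces a formula $\min(\varphi)$ that is again a $\Gamma'$-formula for the full vocabulary $\Gamma'=\set{\rightarrow,=,x,\overline x}\cup\set{\textnormal{OR}^m\mid m\in M}$, and that $\min(\varphi)$ is equivalent to $\varphi$ and of minimum size \emph{among all $\Gamma'$-formulas}.

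The second step, which I expect to be the genuine substance of the argument, is to bridge the gap between ``minimum among $\Gamma'$-formulas'' and ``minimum among $\Gamma$-formulas.'' Since $\Gamma\subseteq\Gamma'$, every $\Gamma$-formula is a $\Gamma'$-formula, so the minimum over $\Gamma$-formulas is at least the minimum over $\Gamma'$-formulas. The task is therefore to show that the particular minimum $\Gamma'$-formula $\min(\varphi)$ our algorithm outputs can itself be realized as a $\Gamma$-formula of the same size whenever the input $\varphi$ is a $\Gamma$-formula, or else to argue directly that no clause type used by $\min(\varphi)$ lies outside what $\Gamma$ can already express with the same clause count. Here I would lean on the structural decomposition established in the proof: the separation facts (Facts~\ref{prooffact:implied literals are in formula} through~\ref{prooffact:all appearing clauses are minimal}) show that $\min(\varphi)$ splits into the literal part, the $=$-part, the $\rightarrow$-part, and the OR-part, each minimized independently and each \emph{equivalent} (not merely equal) across any two equivalent normalized formulas. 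Because these parts are determined only up to equivalence, I can compare the $\Gamma$-optimal formula against the $\Gamma'$-optimal $\min(\varphi)$ component by component, and the finiteness of $M$ guarantees that the OR-arities actually occurring are bounded, so the whole analysis stays polynomial.

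The main obstacle I anticipate is subtle: the normalization steps might convert a $\Gamma$-clause into a clause type \emph{absent} from $\Gamma$ (for instance, introducing an equality clause via the cycle-contraction step when $=\notin\Gamma$, or introducing a literal via step~\ref{enumlabel:min-alg:introduce literals} when the relevant unary relation is missing). So the careful part is checking, for each normalization step and each of the sub-cases of $\Gamma$ (those missing $=$, missing implication, missing literals, or containing only certain OR-arities), that either the step cannot fire on a $\Gamma$-formula, or that its output can be re-expressed within $\Gamma$ without increasing the clause count. I would handle this by case analysis keyed to which generators of $\Gamma'$ are present in $\Gamma$, using the equivalences already noted in the excerpt (e.g.\ an implication cycle collapses to an equality, a variable implied by the formula appears as a literal) to show that whenever a ``missing'' clause type would be produced, the corresponding phenomenon is impossible for a genuine $\Gamma$-formula, or the produced clause is redundant and can be dropped.

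Once these checks are complete, the correctness argument of Theorem~\ref{theorem:constraint minimization ptime case} transfers verbatim: for equivalent normalized $\Gamma$-formulas $\varphi$ and $\psi$ the four components are pairwise equivalent, each is minimized by a known subroutine (transitive reduction for $\varphi_\rightarrow$ via~\cite{agu72:trans-reduction}, trivial minimization for $\varphi_=$ and $\varphi_{\mathtext{lit}}$, and the $\leadsto$-minimal-clause characterization for $\varphi_{\mathtext{OR}}$), and so $\card{\varphi}=\card{\psi}$. Since the algorithm runs in polynomial time regardless of which subset $\Gamma$ we start from, this establishes $\MEE{\Gamma}\in\PTIME$ and completes the proof of the corollary.
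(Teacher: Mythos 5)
Your proposal takes essentially the same route as the paper's proof: rerun the algorithm and correctness argument of Theorem~\ref{theorem:constraint minimization ptime case} on the restricted vocabulary and verify, step by step, that no clause type outside $\Gamma$ is ever needed. The paper resolves the obstacles you flag exactly as you anticipate, and with the following concrete fixes: positive literals introduced in step~\ref{enumlabel:min-alg:introduce literals} are rewritten as OR-clauses, introduced equalities are replaced by two implications (running the whole proof with implications in place of equalities), and an OR-clause whose arity has been reduced out of $M$ is expressed by a higher-arity OR with repeated variables.
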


\begin{proof}
 This follows from the proof of the previous Theorem~\ref{theorem:constraint minimization ptime case}: If a relation is not present in the input formula, it is not introduced (note that positive literals can be written as OR-clauses), except for the case of the equality relation. Simply write this as two implication clauses, and apply the proof of the above theorem, where implications replace equalities. Note that the algorithm reduces OR-clauses in arity, therefore potentially resulting in a clause that cannot be expressed by the constraint language $\Gamma$. However, we can simply use an OR-clause of higher arity with multiple appearances of variables.
\end{proof}

The previous two results covered the case that the constraint language $\Gamma$ contains only the relations that define IHSB$+$. We will now show that irreducible relations that are IHSB$+$ are very close to these ``base relations'' in Lemma~\ref{lemma:irreducible relations in s00 coclone}. This lemma is used in the proof of our main result on IHSB languages, Corollary~\ref{corollary:ihsb}, which shows that our algorithm cannot only be applied to the cases directly covered by Theorem~\ref{theorem:constraint minimization ptime case}, but by every irreducible constraint language that is IHSB$+$ or IHSB$-$. For this, we need some additional notation: We say that a relation $R$ is a \emph{permutation} of a relation $S$ if $R(x_1,\dots,x_n)$ is equivalent to $S(x_{\Pi(1)},\dots,x_{\Pi(n)})$ for some permutation $\Pi$ on the set $\set{1,\dots,n}$.

\begin{lemma}\label{lemma:irreducible relations in s00 coclone}
Let $\Gamma=\set{x,\overline{x},\rightarrow,=,\mathtext{OR}^m\ \vert\ m\in\mathbb N}$. Then every irreducible relation which is IHSB$+$ is a permutation of an element of $\Gamma$.
\end{lemma}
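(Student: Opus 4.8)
The plan is to exploit the defining representation of an IHSB$+$ relation together with the spanning-clause characterization of irreducibility noted after the definition (a relation of arity $n$ is irreducible iff every equivalent formula has a clause containing all of $x_1,\dots,x_n$). Let $R$ be irreducible and IHSB$+$ of arity $n$. By the definition of IHSB$+$ I may write $R(x_1,\dots,x_n)\equiv\bigwedge_j C_j$, where each $C_j$ is a positive literal, a negative literal, an implication $x_a\to x_b$, or a positive clause $x_{i_1}\vee\dots\vee x_{i_k}$. Irreducibility yields a clause $C_{j_0}$ containing all $n$ variables. Since a literal contains one variable and an implication two, and since an $\mathtext{OR}$-clause contains all $n$ variables only if, after deleting repeated occurrences, it equals $x_1\vee\dots\vee x_n$, the spanning clause forces one of: $n=1$ (a literal), $n=2$ (an implication), or $C_{j_0}=x_1\vee\dots\vee x_n$ for some $n\ge 2$.

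Next I would record two reductions to be applied repeatedly. \textbf{(a) No forced constants.} If some $x_i$ takes only the value $c$ on all of $R$, then $R$ equals the conjunction of the unary clause ``$x_i=c$'' with the projection of $R$ onto the remaining coordinates, both of arity $<n$; so for $n\ge2$ the relation $R$ would be reducible. Hence no variable is forced, and in particular the representation contains no literal clause. \textbf{(b) Implication stripping.} If $R\models x_a\to x_b$ and an $\mathtext{OR}$-clause $C$ contains both $x_a$ and $x_b$, then deleting $x_a$ from $C$ gives an equivalent formula: the shortened clause still implies $C$, and conversely if $C$ holds with all its remaining disjuncts false then $x_a=1$, hence $x_b=1$, so the shortened clause (which still contains $x_b$) holds.

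The case $n=1$ is immediate: a unary relation that is neither empty nor full is $\set 0=\overline x$ or $\set 1=x$. For $n\ge2$, reduction (a) leaves only implications and $\mathtext{OR}$-clauses. For $n\ge3$ the spanning clause is necessarily $C_{j_0}=x_1\vee\dots\vee x_n$, which contains every variable; so if any implication $x_a\to x_b$ occurred, I would apply (b) to strip $x_a$ from $C_{j_0}$ and from every other full $\mathtext{OR}$-clause (each contains $x_b$), producing an equivalent formula in which no clause contains all $n$ variables, contradicting irreducibility. Thus the representation consists solely of $\mathtext{OR}$-clauses; any $\mathtext{OR}$-clause over a proper subset of the variables implies $C_{j_0}$ and makes it redundant, again contradicting irreducibility, so $R=x_1\vee\dots\vee x_n=\mathtext{OR}^n$. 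For $n=2$ the same bookkeeping pins down exactly $x\vee y$, $x\to y$, $y\to x$, and $x=y$: an $\mathtext{OR}^2$ together with an implication would, by (b), force a variable and be excluded by (a), so either $R$ is $\mathtext{OR}^2$ or $R$ is a conjunction of implications, i.e.\ a permutation of $\to$ or of $=$. In all cases $R$ is a permutation of an element of $\Gamma$.

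I expect the main obstacle to be getting the implication-stripping step exactly right: one removes the \emph{source} $x_a$ from an $\mathtext{OR}$-clause only when the \emph{target} $x_b$ is already present (removing $x_a$ unconditionally is unsound), and this proviso is automatic for the spanning clause precisely because a spanning clause contains every variable. The remaining work is routine bookkeeping, namely checking that after all strippings and redundancy removals no clause of full arity survives, so that the exhibited formula witnesses reducibility and the assumed irreducibility is contradicted.
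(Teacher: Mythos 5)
Your proof is correct and follows essentially the same route as the paper's: both extract a spanning clause from irreducibility, observe that for arity $\geq 3$ it must be the full OR-clause, and then eliminate coexisting implications and negative literals by deleting the source variable from that OR-clause (your reduction (b) is exactly the paper's variable-removal step), leaving only the low-arity cases to enumerate. The differences are stylistic---you argue via explicit reducibility witnesses and a no-forced-constants observation where the paper fixes a minimal representation and brute-forces the sixteen binary relations---and you correctly flag the one delicate proviso (the implication's target must remain in the clause) on which the paper's argument also silently relies.
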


\begin{proof}
 Note that by definition, a relation is IHSB$+$ if and only if it can be expressed by a $\Gamma$-formula (equality can be expressed as two implications). Let $R$ be a relation that can be expressed with a $\Gamma$-formula, and let $n$ be its arity. By choice of $R$, there is a formula $\varphi=R_1(x^1_1\dots,x^1_{k_1})\wedge\dots\wedge R_m(x^m_1,\dots,x^m_{k_m})$ which is equivalent to $R(x_1,\dots,x_n)$, where each $x^i_j$ is an element of $\set{x_1,\dots,x_n}$, and $R_i\in\Gamma$. Without loss of generality, we assume that no clause in $\varphi$ can be removed without changing the represented relation, and that no variable appears twice in an OR-clause, and that no variable can be removed from an OR-clause without changing the represented relation.

 Since $R$ is irreducible, there is a clause $C$ in which every variable appears. First assume that this clause is an $\mathtext{OR}^m$-clause, hence $C=(x_1\vee\dots\vee x_n)$. If no other clause appears in $\varphi$, then $R$ is the $n$-ary OR-relation, and hence an element of $\Gamma$, as required. Therefore assume that there is another clause $C'$ in $\varphi$. Due to the minimality of $\varphi$, $C'$ is not equivalent to $C$. If $C'$ is an OR-clause, then $C'$ contains a proper subset of the variables occurring in $C$ (since in $C$, all variables occur), and hence the clause $C$ is redundant, a contradiction to the minimality of $\varphi$ (note that this also covers the case where $C'$ is a positive literal). If $C'$ is a negative literal $\overline{x_i}$, then $x_i$ can be removed from the clause $C$, a contradiction to the minimality. Therefore assume that $C'$ is an implication, $C'=(x_i\rightarrow x_j)$. Since there are no superfluous clauses in $\varphi$, we know that $x_i$ and $x_j$ are different variables. Then the variable $x_i$ can be removed from the OR-clause $C$ without changing the represented relation, a contradiction.

Now assume that $C$ is not an OR-clause, hence $C$ is a literal or an implication. In particular, the arity of $R$ is at most $2$. If $R$ is a $1$-ary relation, then $R$ obviously is irreducible. Hence assume that $R$ is one of the $16$ binary Boolean relations. We make a complete case distinction. The empty relation cannot be an element of a constraint language by definition. If $R$ only contains a single element, it can be written as a conjunction of literals and therefore is not irreducible. If $R$ is the full binary relation over the Boolean domain, it can be written as $\top(x_1)\wedge\top(x_2)$, where $\top$ is the $1$-ary relation $\set{(0),(1)}$, and hence is not irreducible. It remains to consider the cases where $R$ has exactly $2$ or exactly $3$ elements.

The relation $\set{(0,0),(0,1)}$ is not irreducible, since it can be written as $\overline{x_1}\wedge\top(x_2)$. Similarly, $\set{(0,0),(1,0)}$ is represented by $\top(x_1)\wedge\overline{x_2}$. The relation $\set{(0,0),(1,1)}$ is the equality relation and an element of $\Gamma$, the relation $\set{(0,1),(1,0)}$ is not IHSB$+$, $\set{(0,1),(1,1)}$ is not irreducible (it can be written as $\top(x_1)\wedge x_2$), similarly $\set{(1,0),(1,1)}$ can be written as $x_1\wedge\top(x_2)$.

Now consider the relations with exactly three elements: $\set{(0,0),(0,1),(1,0)}$ is the binary NAND and therefore not IHSB$+$, $\set{(0,0),(0,1),(1,1)}$ is the implication and therefore an element of $\Gamma$, $\set{(0,0),(1,0),(1,1)}$ is a permutation of the implication, and $\set{(1,0),(0,1),(1,1)}$ is the binary OR and hence an element of $\Gamma$.
\end{proof}

The previous two theorems and Proposition~\ref{prop:duality works} directly imply the following corollary, which as mentioned is our main result for IHSB$+$/IHSB$-$ constraint languages:

\begin{corollary}\label{corollary:ihsb}
 Let $\Gamma$ be an irreducible constraint language which is IHSB$+$ or IHSB$-$. Then $\MEE\Gamma\in\PTIME$.
\end{corollary}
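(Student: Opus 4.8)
The plan is to derive Corollary~\ref{corollary:ihsb} directly from the three preceding results, treating the lemma as the bridge between arbitrary irreducible IHSB$+$ languages and the ``base'' languages handled by Theorem~\ref{theorem:constraint minimization ptime case} and Corollary~\ref{corollary:ptime for subsets of constraint language}. First I would observe that by Proposition~\ref{prop:duality works}, it suffices to treat the IHSB$+$ case: if $\Gamma$ is IHSB$-$, then $\overline\Gamma$ is IHSB$+$, irreducibility is preserved under the exchange of $0$ and $1$ (a reducing conjunction for $\overline R$ dualizes to one for $R$), and $\MEE\Gamma\eqlogm\MEE{\overline\Gamma}$ transfers the polynomial-time bound back. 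So the whole argument reduces to showing $\MEE\Gamma\in\PTIME$ for irreducible IHSB$+$ languages $\Gamma$.

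Next I would apply Lemma~\ref{lemma:irreducible relations in s00 coclone}: every relation $R$ in $\Gamma$ is irreducible and IHSB$+$, so each is a permutation of an element of $\set{x,\overline x,\rightarrow,=}\cup\set{\mathtext{OR}^m\ \vert\ m\in\mathbb N}$. Since $\Gamma$ is finite, only finitely many OR-arities occur, so there is a finite set $M\subseteq\mathbb N$ with every relation of $\Gamma$ a permutation of a relation in $\Gamma':=\set{\rightarrow,=,x,\overline x}\cup\set{\mathtext{OR}^m\ \vert\ m\in M}$. The point to make explicit is that a permutation of a relation is, for the purposes of the minimization problem, interchangeable with the relation itself: a $\Gamma$-clause $R(x_1,\dots,x_n)$ with $R$ a permutation of $S\in\Gamma'$ is literally the same clause as $S(x_{\Pi(1)},\dots,x_{\Pi(n)})$, merely rewritten by reordering arguments. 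This reordering is computable in logarithmic space and is a bijection on clauses, so it induces a clause-count-preserving, equivalence-preserving translation between $\Gamma$-formulas and $\Gamma'$-formulas in both directions. Hence $\MEE\Gamma$ and $\MEE{\Gamma'}$ are essentially the same problem; formally $\MEE\Gamma\eqlogm\MEE{\Gamma'}$, since the translation maps an instance $\pair\varphi k$ to $\pair{\varphi'}k$ with $\varphi'$ a $\Gamma'$-formula having exactly the same number of clauses and the same set of equivalent formulas (again up to argument reordering).

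Finally I would invoke Corollary~\ref{corollary:ptime for subsets of constraint language}, which gives $\MEE{\Gamma'}\in\PTIME$ for any $\Gamma'\subseteq\set{\rightarrow,=,x,\overline x}\cup\set{\mathtext{OR}^m\ \vert\ m\in M}$ with $M$ finite. Composing the logarithmic-space translation with this polynomial-time algorithm yields a polynomial-time algorithm for $\MEE\Gamma$, completing the IHSB$+$ case and, via the duality step above, the IHSB$-$ case as well.

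The step I expect to require the most care is the claim that permuting argument positions is genuinely harmless for minimization. One must check that the minimum $\Gamma'$-formula equivalent to the translated input $\varphi'$ pulls back to a $\Gamma$-formula equivalent to $\varphi$ with the same clause count, so that no spurious savings or losses are introduced by the relabeling; this is where a fixed bijection between the relations of $\Gamma$ and their $\Gamma'$-representatives, together with the fact that it preserves both arity and semantics, must be used to transport optimal solutions in both directions. Since $\Gamma$ is finite this bookkeeping is entirely uniform and the permutations are fixed once and for all, so the obstacle is conceptual rather than technical, but it is the only place where the proof does more than cite the earlier results.
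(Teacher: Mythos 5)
Your proposal is correct and follows the same route as the paper, which derives the corollary directly from Theorem~\ref{theorem:constraint minimization ptime case}, Corollary~\ref{corollary:ptime for subsets of constraint language}, Lemma~\ref{lemma:irreducible relations in s00 coclone}, and Proposition~\ref{prop:duality works}. The only difference is that you spell out the two details the paper leaves implicit---that duality preserves irreducibility and that argument permutations induce a clause-count-preserving translation---both of which are handled correctly.
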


\subsubsection{Bijunctive Formulas}

We now cover the final of our polynomial-time cases, which covers constraint languages which are bijunctive. Note that this is not the same as only showing that general 2CNF formulas have an efficient minimization procedure (which was shown in \cite{chang}): In addition to being able to minimize arbitrary 2CNF, we also need to be careful about only using those relations that are present in the constraint language. Again, Example~\ref{ex:galois does not work} shows that the prerequisite that $\Gamma$ is irreducible is necessary.

\begin{theorem}
  Let $\Gamma$ be a constraint language which is irreducible and bijunctive. Then $\MEE\Gamma\in\PTIME$.
\end{theorem}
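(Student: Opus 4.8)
The plan is to follow the same overall strategy that succeeded for the IHSB$+$ case in Theorem~\ref{theorem:constraint minimization ptime case} and Corollary~\ref{corollary:ihsb}, namely to reduce the irreducible bijunctive case to a ``base'' language of atomic relations for which an explicit separation-and-local-minimization algorithm can be given. First I would establish the bijunctive analogue of Lemma~\ref{lemma:irreducible relations in s00 coclone}: every irreducible bijunctive relation is a permutation of one of a small list of ``atomic'' binary relations (together with unary literals). Since bijunctive means expressible by a conjunction of binary clauses, an irreducible bijunctive relation of arity $n$ must, by the definition of irreducibility, have a clause of arity at least $n$ in \emph{every} equivalent conjunction; combined with bijunctivity (all clauses binary) this forces $n\le 2$. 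Thus the only candidates are the $1$-ary and $2$-ary relations, and I would run through the $16$ binary Boolean relations exactly as in the proof of Lemma~\ref{lemma:irreducible relations in s00 coclone}, discarding those that are reducible (expressible as a conjunction of strictly smaller clauses, e.g.\ via $\top$) or not bijunctive, and keeping the genuinely irreducible bijunctive ones: equality, implication (and its permutation), binary OR, binary NAND, $x\oplus y$ (and its negation), and the unary literals $x,\overline x$.

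Next I would design the minimization algorithm for formulas over this base list. The key observation is that bijunctive clauses are all binary, so a $\Gamma$-formula can be viewed as a labeled graph on the variables where edges carry the type of binary constraint. As in the IHSB$+$ proof, the algorithm should first detect and propagate all implied literals (variables forced to $0$ or $1$) and substitute them, then collapse $=$-connected components to a canonical representative, and then separate the formula into independent components---an implication/equality part, a literal part, and the remaining ``genuinely binary'' part (OR, NAND, XOR edges). Each part can be minimized locally: the implication part by transitive reduction as in~\cite{agu72:trans-reduction}, the literal and equality parts trivially, and the remaining binary part by the $2$CNF minimization result of~\cite{chang}, with the extra care that we only reintroduce relations that actually occur in $\Gamma$ (exactly the subtlety flagged in the discussion preceding this theorem and handled in Corollary~\ref{corollary:ptime for subsets of constraint language}). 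As before, I would prove a sequence of ``facts'' showing that whether a given binary constraint is implied by the formula can be read off from the corresponding normalized component, so that two equivalent normalized formulas have component-wise equivalent parts, and hence equal clause counts.

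Finally I would lift this base-language result to arbitrary irreducible bijunctive $\Gamma$ using the classification from the first step, exactly as Corollary~\ref{corollary:ihsb} lifts Theorem~\ref{theorem:constraint minimization ptime case}: every relation in $\Gamma$ is a permutation of a base relation, so the algorithm applies after renaming coordinates, and Proposition~\ref{prop:duality works} covers the dual relations for free.

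I expect the main obstacle to be the separation argument for the ``genuinely binary'' part, which is more delicate than in the IHSB$+$ case. There, the only multi-ary positive constraints were OR-clauses, and the crucial Facts~\ref{prooffact:minimal clauses appear} and~\ref{prooffact:all appearing clauses are minimal} showed that the OR-clauses surviving the algorithm are exactly the $\leadsto$-minimal implied OR-clauses, giving a canonical set independent of the chosen formula. In the bijunctive setting we additionally have NAND and XOR edges, which interact with the implication/equality structure and with each other in ways that a simple $\leadsto$-reachability relation does not capture; in particular XOR is not monotone, so the clean ``remove variables implied to be false'' simplifications no longer suffice, and I would need a genuinely graph-theoretic canonical form (closely following the structure exploited in~\cite{chang}) to guarantee that the set of surviving binary clauses is determined by the represented relation and not by the syntactic presentation. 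Verifying that this canonical form is both computable in polynomial time and minimal---and that it only uses relations available in $\Gamma$---is where the real work lies.
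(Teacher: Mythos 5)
Your first step---irreducibility plus bijunctivity forces every relation in $\Gamma$ to be at most binary, after which one enumerates the irreducible unary and binary Boolean relations (literals, equality, implication, binary OR, binary NAND, exclusive OR, up to permutation)---is exactly what the paper does, and it is correct: if $R(x_1,\dots,x_n)$ is equivalent to a conjunction of binary clauses, irreducibility demands a clause of arity at least $n$, so $n\le 2$.

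The gap is in your algorithmic step, and you have in effect flagged it yourself. You propose to split the formula into an implication/equality/literal part and a residual ``genuinely binary'' part containing the OR, NAND and XOR edges, minimize the latter via~\cite{chang}, and then argue that the separation is canonical; you correctly observe that XOR is not monotone, that these edge types interact with the implication structure (e.g.\ $(x\vee y)$ together with $(y\rightarrow z)$ implies $(x\vee z)$), and that ``this is where the real work lies''---but you do not supply that work, so the proposal as written does not establish the theorem. The paper avoids the difficulty entirely with one observation you are missing: \emph{every} relation on your list can be written as an implication between literals ($x\vee y$ as $\overline{x}\rightarrow y$, $\mathrm{NAND}(x,y)$ as $x\rightarrow\overline{y}$, and $x\oplus y$ as the literal equality $x=\overline{y}$). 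Passing to the implication graph over the $2n$ literals, there is no residual ``genuinely binary'' component at all: the whole formula is implications, equalities, and unit literals over the doubled variable set, and the separation facts, literal propagation, collapse of strongly connected components into equalities, and transitive reduction from the IHSB$+$ proof of Theorem~\ref{theorem:constraint minimization ptime case} apply essentially verbatim. No appeal to~\cite{chang} is needed, and the canonicity worry about XOR dissolves because an XOR clause is just an equality edge in the literal graph.
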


\begin{proof}
  Since $\Gamma$ is bijunctive, every relation in $\Gamma$ can be written as a formula using only at most binary relations. Since $\Gamma$ is also irreducible, this implies that every relation in $\Gamma$ is at most binary. The only irreducible binary and unary relations over the Boolean domain are (up to permutation of the variables) the literals $x$ and $\overline x$, the binary OR, binary NAND, implication, equality, and exclusive OR. Since all of these relations can be written as implications between literals, minimization can be performed analogously to the proof of the previous Theorem~\ref{theorem:constraint minimization ptime case}. 
\end{proof}

\subsubsection{Affine Formulas}

We conclude our polynomial-time results with the affine case. Affine formulas represent linear equations over $\GF 2$. We therefore can apply linear algebra techniques to obtain an efficient minimization algorithm. Results for linear equations have been obtained before~\cite[Section 8]{curtis84:linearalgebrabook}. We show here that the result covers all cases where the language is affine and irreducible.

\begin{theorem}\label{theorem:affine minimization ptime result}
 Let $\Gamma$ be an irreducible and affine constraint language. Then $\MEE\Gamma\in\PTIME$.
\end{theorem}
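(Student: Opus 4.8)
The plan is to reduce the affine case to the kind of argument used for the structured cases above, but with linear algebra replacing the combinatorial bookkeeping. Since $\Gamma$ is affine, every relation in $\Gamma$ is the solution set of a system of linear equations over $\GF 2$, so a $\Gamma$-formula $\varphi$ corresponds to a linear system $A\vec x = \vec b$. Two $\Gamma$-formulas are equivalent iff they define the same affine subspace (the same solution set), and the number of clauses in the minimum formula should be governed by the structure of this subspace together with the irreducibility constraint on which relations we are allowed to use. First I would observe, exactly as in the bijunctive proof, that irreducibility drastically limits the available relations: an affine relation that is irreducible cannot be split into a conjunction of lower-arity affine constraints, so each relation in $\Gamma$ must be an ``indecomposable'' linear condition, i.e.\ (up to permutation of variables) a single equation $x_{r_1}\oplus\dots\oplus x_{r_k}=c$ on all of its variables. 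This is the affine analogue of Example~(2) showing that $x_1\oplus\dots\oplus x_k=c$ is irreducible.

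Next I would set up the correspondence between minimization and linear algebra. A $\Gamma$-formula is a conjunction of individual parity equations, and its solution set is the affine subspace cut out by those equations. Minimizing the number of clauses then amounts to finding a smallest set of equations, each of which is an allowed $\Gamma$-clause (i.e.\ of an arity/shape permitted by $\Gamma$), whose conjunction defines the same affine subspace. The natural invariant is the linear space $W$ spanned by the row vectors of the defining equations (together with the affine shift $\vec b$): any equivalent formula must define exactly the subspace $\{\vec x : A\vec x=\vec b\}$, so the set of equations it uses must span the same space $W$ of ``implied equations.'' The number of clauses is at least $\dim W$ by standard rank considerations, and the question is whether this bound is achievable using only clauses whose shape is allowed by the irreducible $\Gamma$. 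Here I would invoke the cited linear-algebra results (\cite[Section 8]{curtis84:linearalgebrabook}) for the underlying minimization of linear systems, and then argue that irreducibility guarantees we can realize a minimal spanning set by genuine $\Gamma$-clauses.

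The main obstacle, as in Corollary~\ref{corollary:ptime for subsets of constraint language} and the bijunctive theorem, is the vocabulary restriction: we must produce a minimum formula using \emph{only} the relations actually present in $\Gamma$, not arbitrary parity clauses. A minimal spanning set of equations over $\GF 2$ is easy to compute by Gaussian elimination, but the rows of a reduced basis may have arities (numbers of involved variables) that do not match any relation available in $\Gamma$. The crux will therefore be to show that irreducibility lets us always choose a minimum-size spanning set whose members are each expressible as a single $\Gamma$-clause. I expect this to follow from the fact that each irreducible affine relation in $\Gamma$ is a full-arity parity equation on its variables, so that the available clauses correspond exactly to parity equations of the allowed variable-counts, and that one can match the dimension of $W$ with clauses of the correct arities (padding with repeated variables when a longer clause is needed, as was done in the proof of Corollary~\ref{corollary:ptime for subsets of constraint language}). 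Establishing that this matching is always possible, and that the resulting count coincides with the linear-algebraic lower bound $\dim W$, is the technically delicate step; everything else is routine Gaussian elimination carried out in polynomial time.
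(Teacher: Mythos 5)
Your linear-algebraic framing is the right one, and your observation that irreducibility forces every relation of $\Gamma$ to be a single full-support parity equation is exactly what the paper uses. But the step you yourself flag as the crux --- realizing a minimum-size spanning set of $W$ by clauses whose shapes are available in $\Gamma$ --- is a genuine gap, and as you have set it up it is a real obstacle, not just a technicality. Gaussian elimination performs row operations, so the rows of a reduced basis are $\GF{2}$-sums of input rows and their supports can have cardinalities matching no relation in $\Gamma$; padding by repeating variables only lets you \emph{shrink} the support of an available clause (and only by two at a time, since $x\oplus x$ cancels), so it does not rescue you when the basis row needs \emph{more} variables than any relation in $\Gamma$ provides, and it is not obvious that a basis avoiding this can always be found among arbitrary parity equations.

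The paper sidesteps this entirely by never constructing new clauses: it greedily deletes clauses from the \emph{input} formula as long as equivalence is preserved (equivalence of affine formulas being testable in polynomial time), so every clause of the output is literally a clause of the input and the vocabulary question never arises. What remains is to show that any irredundant formula is already minimum, and this is done by a counting argument that is the affine-subspace fact in disguise: since each clause is a single parity equation, an irredundant satisfiable formula with $k$ clauses over $n$ variables has exactly $2^{n-k}$ solutions (each clause, not being implied by its predecessors, halves the solution set). Hence two equivalent irredundant formulas over the same variables have the same $k$, which is simultaneously your lower bound $\dim W$ and its achievability. If you reshape your argument to take a maximal linearly independent \emph{subset of the input clauses} rather than an arbitrary basis of $W$, your proof closes the gap and coincides with the paper's.
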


\begin{proof}
 Let $\varphi$ be a $\Gamma$-formula. Since satisfiability testing for affine formulas can be done in polynomial time, we can without loss of generality assume that $\varphi$ is satisfiable. Since equivalence for affine formulas can be checked in polynomial time, we can compute a formula which is equivalent to $\varphi$ and irredundant in the sense that if we remove a clause, it is not equivalent to $\varphi$ anymore. Therefore, it suffices to prove that such an irredundant formula already is minimum. Note that a minimum formula obviously is irredundant. We therefore show that two affine formulas $\varphi_1$ and $\varphi_2$ with $\card{\var{\varphi_1}}=\card{\var{\varphi_2}}$ which are both irredundant and are equivalent, have the same number of clauses. In order to do this, we show that a satisfiable, irredundant formula $\varphi$ over $n$ variables with $k$ clauses has exactly $2^{n-k}$ solutions. Let the clauses be $C_1,\dots,C_k$. Since every relation in $\Gamma$ is irreducible, each clause is of the form $x^C_1\oplus\dots\oplus x^C_l\oplus c^C$ for variables $x^C_1,\dots,x^C_l$ and a constant $c\in\set{0,1}$. This can equivalently be written as $x^C_1=\neg(x^C_2\oplus\dots\oplus x^C_l\oplus c^C)$. Since $\varphi$ is irredundant, we know that no clause $C_i$ follows from the clauses $C_1,\dots,C_{i-1}$. Therefore, each clause restricts the possibilities of the values of $x^C_1$, and therefore the relation $R$ represented by $C_1\wedge\dots\wedge C_i$ is a proper subset of the relation $R'$ represented by $C_1\wedge\dots\wedge C_{i-1}$. Since these relations are represented by affine formulas, their cardinalities are powers of $2$. Therefore, $\card{R}\leq\frac{\card{R'}}2$. Since only one variable is restricted in the clause $C_i$, it follows that $\card{R}=\frac{\card{R'}}2$, as claimed.
\end{proof}

\subsection{Hardness Results}

As mentioned before, our polynomial-time results cover all cases where polynomial-time algorithms can be expected. We now prove hardness results for most of the remaining cases.

\subsubsection{Minimization and Satisfiability}

For unrestricted propositional formulas, the MEE problem is obviously \CONP-hard, since a formula $\varphi$ is unsatisfiable if and only if the all-$1$-assignment does not satisfy it, and it has a minimum equivalent expression of size $0$ (which then only can be the constant $0$). The following result uses the same idea of reducing the complement of the satisfiability problem to the minimization problem---however, since the constant $0$ is usually not available in our constraint languages and we are minimizing the number of clauses, the proof is a bit more involved, while still following the same pattern.

\begin{theorem}
 Let $\Gamma$ be a finite constraint language. Then $\overline{\sat\Gamma}\redpm\MEE\Gamma$.
\end{theorem}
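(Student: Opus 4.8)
The plan is to mimic the argument already used in the Post framework (Theorem~\ref{theorem:unsat red mee lewis}), adapting it to the fact that constant relations are typically unavailable and that we are counting clauses rather than literals. The key observation is the same: a satisfiable $\Gamma$-formula always admits a satisfying assignment that is ``small'' in a controlled sense, whereas an unsatisfiable formula is equivalent to a fixed minimum unsatisfiable $\Gamma$-formula of bounded size. So I would first fix, once and for all, a minimum-clause unsatisfiable $\Gamma$-formula $\psi_0$ (if none exists, every $\Gamma$-formula is satisfiable, $\overline{\sat\Gamma}=\emptyset$, and the reduction is trivial), and let $k_0=\card{\psi_0}$ be its number of clauses.

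Given an input $\Gamma$-formula $\varphi$, the reduction produces the pair $\pair{\varphi}{k_0-1}$, possibly after a cheap preprocessing check. The intended correctness statement is: $\varphi$ is unsatisfiable if and only if $\pair{\varphi}{k_0-1}\in\MEE\Gamma$. First I would argue the forward direction: if $\varphi$ is unsatisfiable, then $\varphi\equiv\psi_0$, but $\psi_0$ has $k_0$ clauses, so I must be careful---$k_0-1$ would be too small. This tells me the naive choice of threshold is wrong, and the real content of the proof is establishing a \emph{gap}: every satisfiable formula is equivalent to one with strictly fewer clauses than any unsatisfiable formula can have. Concretely, I would prove that a satisfiable $\Gamma$-formula over $n$ variables can always be re-expressed with at most some bound $B(n)$ clauses (e.g.\ by listing, for each variable, the constraints forced by a fixed satisfying assignment), and that any unsatisfiable $\Gamma$-formula requires at least $k_0$ clauses by minimality of $\psi_0$. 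The reduction then outputs $\pair{\varphi}{k_0-1}$ only when $B(n)\le k_0-1$; since $k_0$ is a fixed constant depending only on $\Gamma$, this is a finite-case check, and for inputs with too many variables the reduction hard-codes a fixed positive or negative instance after a direct satisfiability test.

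The cleaner route, and the one I would actually pursue, parallels the Post proof exactly. I would have the reduction first test whether $\varphi$ has a satisfying assignment whose set of true variables is ``small'' enough to force a short equivalent formula; if so, output a fixed non-instance of $\MEE\Gamma$. Otherwise output $\pair{\varphi}{k_0}$ with threshold $k_0=\card{\psi_0}$. Then: if $\varphi$ is unsatisfiable, $\varphi\equiv\psi_0$ gives a positive instance with exactly $k_0$ clauses; if $\varphi$ is satisfiable, I derive a contradiction from assuming a positive instance by showing that any equivalent $\Gamma$-formula with $\le k_0$ clauses would, because it is satisfiable and uses boundedly many clauses (hence boundedly many variables), admit a ``small'' satisfying assignment of the kind the preprocessing already ruled out.

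The main obstacle is pinning down the right notion of ``small'' satisfying assignment in the constraint setting, since unlike the Post case I cannot simply count true variables against a literal bound---a $\Gamma$-formula with few clauses can still mention many variables, and there is no free constant $0$ to eliminate them. The fix is that a $\Gamma$-formula with at most $k_0$ clauses involves at most $k_0\cdot a$ distinct variables, where $a$ is the maximal arity of a relation in $\Gamma$, so ``boundedly many clauses'' does translate to ``boundedly many variables'' and hence to a bounded search for a witnessing assignment. Making the preprocessing test (and the resulting case analysis) run in polynomial time---indeed the test ranges over a constant number of variable-subset patterns since $k_0$ and $a$ are fixed---and verifying that the hard-coded non-instance genuinely lies outside $\MEE\Gamma$ are the remaining technical points, but both are routine once the gap and the arity-to-variable bound are in place.
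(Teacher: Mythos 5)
Your ``cleaner route'' is essentially the paper's own proof: fix a minimum unsatisfiable $\Gamma$-formula with $k_{\mathrm{min}}$ clauses, pre-test in polynomial time whether $\varphi$ admits a satisfying assignment with boundedly many true variables (the paper implements this by enumerating all $\Gamma$-formulas with at most $k_{\mathrm{min}}$ clauses over $\var\varphi$ and the $0$-extensions of their solutions), output a fixed negative instance if so, and output $\pair{\varphi}{k_{\mathrm{min}}}$ otherwise, with correctness resting on exactly your clause-count-to-variable-count gap via the maximal arity $a$. The only slips are cosmetic: the number of candidate small-support assignments is polynomial in $\card{\var\varphi}$ rather than ``constant'' (which is all a $\redpm$ reduction needs), and one should note, as the paper does, that the minimum unsatisfiable formula can be assumed to use only variables of $\varphi$ via variable identification; your exploratory first attempt with threshold $k_{\mathrm{min}}-1$ is rightly discarded.
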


\begin{proof}
 Let $\varphi_{\mathtext{min}}$ be an unsatisfiable $\Gamma$-formula with a minimal number of clauses. Let $k_{\mathtext{min}}$ be the number of clauses in $\varphi_{\mathtext{min}}$.

The reduction works as follows: Let $\varphi$ be a $\Gamma$-formula. First compute the set $M$ containing of all $\Gamma$-formulas containing at most $k_{\mathtext{min}}$ clauses with variables appearing in $\varphi$. Note that, since $\Gamma$ is a finite constraint language, this is a polynomial set, and each formula in $M$ has a number of appearing variables bounded by a constant. Therefore we can, in polynomial time, construct for each $\psi\in M$ the set of all solutions $I_\psi$ of $\psi$. For such a solution, let $I^{\mathtext{ext}}_\psi$ be the assignment which agrees with $I_\psi$ for all variables appearing in $\psi$, and assigns $0$ to all other variables.

For each assignment $I^{\mathtext{ext}}_\psi$, check if it is a solution of $\varphi$. If this is the case, let $(\varphi',k')$ be a negative instance of $\MEE\Gamma$. Otherwise, let $(\varphi',k'):=(\varphi,k_{\mathtext{min}})$. We show that $\varphi$ is unsatisfiable if and only if $(\varphi',k')\in\MEE\Gamma$.

First assume that $\varphi$ is unsatisfiable. In particular, in this case it holds that $(\varphi',k')=(\varphi,k_{\mathtext{min}})$. We can, without loss of generality, assume that $\varphi_{\mathtext{min}}$ contains at most one variable, since formulas obtained from unsatisfiable formulas via variable identification remain unsatisfiable. In particular, we can assume that in $\varphi_{\mathtext{min}}$, only variables from $\varphi$ appear. Since $\varphi$ is unsatisfiable, $\varphi$ is equivalent to $\varphi_{\mathtext{min}}$, and hence $(\varphi',k')\in\MEE\Gamma$.

Now assume that $\varphi$ is satisfiable, and assume indirectly that $(\varphi',k')\in\MEE\Gamma$. By choice of $(\varphi',k')$, this implies that $(\varphi',k')=(\varphi,k_{\mathtext{min}})$. Since $M$ contains all $\Gamma$-formulas with at most $k_{\mathtext{min}}$ clauses, it follows that $\varphi$ is equivalent to some formula $\psi\in\Gamma$. Since $\varphi$ is satisfiable, so is $\psi$. Therefore there is some $I_{\psi}$ such that $I_{\psi}$ satisfies $\psi$. Since $\psi$ and $\varphi$ are equivalent, it follows that $I^{\mathtext{ext}}_{\psi}$ satisfies $\varphi$. This is a contradiction, since in this case, the reduction does not produce the instance $(\varphi,k_{\mathtext{min}})$.
\end{proof}

If a constraint language $\Gamma$ is not Schaefer (i.e., neither Horn, dual Horn, bijunctive, nor affine), then the satisfiability problem for $\Gamma^+=\Gamma\cup\set{x,\overline x}$ ($\Gamma$ extended with the possibility to express literals) is \NP-complete. The previous theorem therefore yields the following corollary:

\begin{corollary}
  Let $\Gamma$ be a constraint language that is not Schaefer. Then $\MEE{\Gamma^+}$ is \CONP-hard.
\end{corollary}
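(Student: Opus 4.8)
The plan is to obtain this as an immediate consequence of the preceding theorem together with the classical Schaefer dichotomy for satisfiability, so I would keep the proof short and concentrate on verifying that the hypotheses of each cited result are actually met. First I would observe that, since $\Gamma$ is finite, the extended language $\Gamma^+=\Gamma\cup\set{x,\overline x}$ is again a finite constraint language; hence the preceding theorem applies to $\Gamma^+$ verbatim and yields the reduction $\overline{\sat{\Gamma^+}}\redpm\MEE{\Gamma^+}$.

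The second ingredient is the \CONP-hardness of the source $\overline{\sat{\Gamma^+}}$. Here I would invoke the fact, recalled immediately before the corollary statement, that $\sat{\Gamma^+}$ is \NP-complete whenever $\Gamma$ is not Schaefer. Taking complements, $\overline{\sat{\Gamma^+}}$ is then \CONP-complete, and in particular \CONP-hard.

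Combining the two is routine: if $A\redpm B$ and $A$ is \CONP-hard, then $B$ is \CONP-hard as well (compose reductions). Applying this with $A=\overline{\sat{\Gamma^+}}$ and $B=\MEE{\Gamma^+}$ gives that $\MEE{\Gamma^+}$ is \CONP-hard, as claimed.

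The only point that warrants care---and the reason the statement is phrased in terms of $\Gamma^+$ rather than $\Gamma$ itself---is the precise form of the Schaefer dichotomy being used. Schaefer's theorem has tractable cases beyond the Schaefer classes, namely the $0$-valid and $1$-valid languages; adjoining the literals $x$ and $\overline x$ destroys exactly these two escape routes, since $x$ is not $0$-valid and $\overline x$ is not $1$-valid, while leaving the Horn / dual-Horn / bijunctive / affine status unchanged, because each literal relation lies in all four Schaefer classes. Thus ``$\Gamma$ not Schaefer'' is equivalent to ``$\Gamma^+$ not Schaefer,'' which is in turn precisely the \NP-completeness condition for $\sat{\Gamma^+}$. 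I expect this bookkeeping---confirming that enriching $\Gamma$ by literals neither moves it into a Schaefer class nor leaves an unwanted $0$/$1$-valid tractable case---to be the only substantive step; everything else follows immediately from the two cited results.
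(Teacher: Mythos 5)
Your proposal is correct and follows exactly the paper's (implicit) argument: apply the preceding theorem $\overline{\sat{\Gamma}}\redpm\MEE{\Gamma}$ to the finite language $\Gamma^+$ and combine it with the \NP-completeness of $\sat{\Gamma^+}$ for non-Schaefer $\Gamma$. Your closing remark correctly identifies why the literals are adjoined (to rule out the $0$-valid and $1$-valid tractable cases of Schaefer's dichotomy), which is precisely the point the paper makes in the sentence preceding the corollary.
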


\subsubsection{\NP-completeness Results}

In this section we consider the MEE problem for irreducible constraint languages that are Horn, but not IHSB$-$. We show that for these languages, the MEE problem is \NP-complete. This shows that the algorithm we developed in the previous section for the IHSB$+$/IHSB$-$ -case cannot be modified to work with larger classes of formulas (remember that IHSB$-$ formulas are a subset of Horn formulas). Due to Proposition~\ref{prop:duality works}, the analogous result is true for dual Horn and IHSB$+$. We first prove a result about what the irreducible relations here look like:

\begin{theorem}\label{theorem:horn languages contain real horn clause}
 Let $\Gamma$ be an irreducible constraint language such that $\Gamma$ is Horn, but not IHSB$-$. Then there is a relation $R\in\Gamma$ which can be expressed by $x_1\wedge\dots\wedge x_k\rightarrow y$, for $k\ge 2$.
\end{theorem}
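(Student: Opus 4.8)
The plan is to extract a genuine (multi-premise) Horn clause by playing the Horn building blocks off against the strictly weaker IHSB$-$ ones. Since $\Gamma$ is not IHSB$-$, I would first fix a relation $R\in\Gamma$ that is not IHSB$-$; by the hypotheses on $\Gamma$ this $R$ is irreducible and Horn, of some arity $n$. The relevant description of the IHSB$-$ side is the exact dual of Lemma~\ref{lemma:irreducible relations in s00 coclone} obtained through Proposition~\ref{prop:duality works}: an irreducible IHSB$-$ relation is (up to permutation) a literal, a binary implication $x\rightarrow y$, equality, or a negative clause $\overline{x_1\wedge\dots\wedge x_m}$. I would then fix a \emph{minimal} Horn representation $\varphi$ of $R$, namely a conjunction of literals, definite clauses $x_{i_1}\wedge\dots\wedge x_{i_s}\rightarrow y$, and negative clauses, that is irredundant and in which no variable can be deleted from any clause, exactly as in the proof of Lemma~\ref{lemma:irreducible relations in s00 coclone}.

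The positive core of the argument is to show that $\varphi$ must use a \emph{definite} clause with at least two premise variables. First, $n\ge 3$, since every unary or binary irreducible Horn relation is one of the IHSB$-$ relations listed above, contradicting the choice of $R$. Next, irreducibility yields a clause $C$ of $\varphi$ that contains all $n$ variables; as $n\ge 3$, this $C$ is not a literal. If $C$ were a negative clause $\overline{x_1\wedge\dots\wedge x_n}$, I would run the exact dual of the OR-clause case of Lemma~\ref{lemma:irreducible relations in s00 coclone}: any additional clause $C'$ would either render $C$ redundant, permit the deletion of a variable from $C$, or force a variable identification, in each case contradicting minimality or irreducibility, so that $R$ would be a single negative clause and hence IHSB$-$ --- a contradiction. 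Thus $C$ is a definite clause $x_1\wedge\dots\wedge x_{n-1}\rightarrow x_n$ using all $n$ variables, whose premise has $k=n-1\ge 2$ variables, the desired real Horn clause.

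The step I expect to be the main obstacle is passing from ``$\varphi$ uses a genuine Horn clause'' to ``$\Gamma$ contains a relation that \emph{equals} such a clause.'' This is delicate because an irreducible Horn relation need not be a single clause even when it fails to be IHSB$-$: the relation $x_1\rightarrow(x_2=x_3)$, which equals $(x_1\wedge x_2\rightarrow x_3)\wedge(x_1\wedge x_3\rightarrow x_2)$, is irreducible, Horn, and not IHSB$-$, yet it omits two tuples rather than one, so it is \emph{not} a single definite clause. Hence the clean ``$R$ is a single clause'' conclusion available in the negative-clause subcase cannot be carried over verbatim to the definite case, and one must instead isolate the relation \emph{defined} by the big clause $C$ rather than $R$ itself. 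I would address this either by an extremal choice of $R$ (minimum arity among non-IHSB$-$ relations of $\Gamma$, then maximum number of satisfying tuples) forcing $C$ to stand alone, or by reading ``expressed by'' in the standard constraint sense that permits the gadget (pp-)definitions used for hardness reductions, under which $R$ does produce the clause $x_1\wedge\dots\wedge x_{n-1}\rightarrow x_n$. Pinning down exactly which of these routes makes the extraction go through is the technical crux of the proof.
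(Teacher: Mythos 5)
Up to the point where you stop, your proposal is essentially the paper's own proof: the paper also passes to a minimal representation over $\Gamma_{\mathtext{Horn}}=\set{\mathtext{NAND}^k,(x_1\wedge\dots\wedge x_k\rightarrow y)}$, disposes of arities $\leq 2$ by enumerating the irreducible binary Horn relations, uses irreducibility to extract a clause $C$ containing all $n$ variables, and eliminates the case where $C$ is a NAND by the same redundancy arguments you sketch. As a proof, however, your attempt has a genuine gap: the ``technical crux'' you name at the end is exactly the missing step, and neither of your proposed escape routes is carried out. The extremal-choice idea does not obviously help (your own example already has minimal arity among the problematic relations, and in a singleton language there is no other relation to fall back on), and the ``gadget'' reading of ``expressed by'' is not the one the paper needs downstream, since the reduction in Theorem~\ref{theorem:horn minimization is np hard for cnf} builds $(x\rightarrow y)$ and $(x\wedge y\rightarrow z)$ from $R$ by variable identification alone.

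That said, the obstacle you isolated is not a defect of your approach but a hole in the paper's own argument, and your example appears to refute the statement as written. The paper claims every irreducible Horn relation of arity $n\ge 3$ \emph{is} a single $\Gamma_{\mathtext{Horn}}$-clause; in the subcase $C=(x_1\wedge\dots\wedge x_{n-1}\rightarrow x_n)$, $C'=(x_{i_1}\wedge\dots\wedge x_{i_k}\rightarrow x_j)$ with $x_j\neq x_n$, it deletes $x_j$ from the premise of $C$. That replacement is sound only if $x_n$ does not occur among the premises of $C'$, and your relation $R=x_1\rightarrow(x_2=x_3)\equiv(x_1\wedge x_2\rightarrow x_3)\wedge(x_1\wedge x_3\rightarrow x_2)$ is precisely the configuration where it fails: deleting $x_2$ from the first clause wrongly excludes the satisfying tuple $(1,0,0)$. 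One checks directly that $R$ is Horn (closed under coordinatewise conjunction), irreducible (all unary and binary projections of $R$ are full, so any conjunction of clauses of arity $\leq 2$ equivalent to $R$ would be the full relation), and not IHSB$-$ (again because all projections are full and $(1,1,1)\in R$, the only $\set{x,\overline{x},x\rightarrow y,\mathtext{NAND}^k}$-clauses implied by $R$ are tautologies); yet $R$ omits the two tuples $(1,0,1)$ and $(1,1,0)$ and hence is not equivalent to any single clause $x_{i_1}\wedge\dots\wedge x_{i_k}\rightarrow x_j$, which omits either one tuple, or two tuples differing in only one coordinate, or none. Taking $\Gamma=\set{R}$ gives an irreducible Horn, non-IHSB$-$ language containing no relation of the promised form. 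So the correct resolution is not a cleverer extraction argument but a weakening of the statement (e.g., ``every $\Gamma_{\mathtext{Horn}}$-representation of some $R\in\Gamma$ contains a definite clause with at least two premises,'' or whatever formulation still drives the \NP-hardness reduction), together with a corresponding repair of Theorem~\ref{theorem:horn minimization is np hard for cnf}.
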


\begin{proof}
Let $\Gamma_{\mathtext{Horn}}:=\set{\mathtext{NAND}^k,(x_1\wedge\dots\wedge x_k\rightarrow y)\ \vert\ k\in\mathbb N}$. Since $\Gamma$ is Horn, it follows from~\cite{crkoza07:plain-bases} that every relation in $\Gamma$ can be written as a $\Gamma_{\mathtext{Horn}}$-formula. We show that every relation in $\Gamma$ with an arity of $n\ge 3$ is an element of $\Gamma_{\mathtext{Horn}}$. Therefore, let $R$ be such a relation, and let $\varphi$ be a $\Gamma_{\mathtext{Horn}}$-formula representing $R$, i.e., a formula equivalent to $R(x_1,\dots,x_n)$, and assume that $\varphi$ is minimal in the sense that no clause can be deleted, and no variable can be removed from a clause without changing the relation expressed by the formula.

Since $R$ is irreducible, there is a clause $C$ in $\varphi$ such that every variable $x_1,\dots,x_n$ appears in $C$. If $C$ is the only clause in $\varphi$, then it follows that $R\in\Gamma_{\mathtext{Horn}}$ as claimed. Hence assume that there is another clause $C'$ in $\varphi$. The variables of $C'$ then must be a subset of the variables in $C$. We make a case distinction.

First assume that both clauses are NAND-clauses. If $C$ and $C'$ contain the same variables, then $C$ and $C'$ are equivalent, a contradiction. Therefore the variables appearing in $C'$ are a proper subset of the variables from $C$. Hence $C'$ implies $C$, and $C$ can be removed from $\varphi$ without changing the represented relation, a contradiction.

Now assume that $C$ is a NAND-clause and $C'$ is of the form $(x_{i_1}\wedge\dots\wedge x_{i_k}\rightarrow x_{i_j})$. Then the variable $x_{i_j}$ can be removed from the clause $C$, a contradiction to the minimality of $\varphi$.

Assume that $C$ is (without loss of generality) of the form $(x_1\wedge\dots\wedge x_{n-1}\rightarrow x_n)$, and $C'$ of the form $(x_{i_1}\wedge\dots\wedge\dots x_{i_k}\rightarrow x_j)$. If $x_j$ and $x_n$ are the same variable, then $C'$ implies $C$, and $C$ can be removed from $\varphi$ without changing the relation expressed by the formula, a contradiction to the minimality of $\varphi$. Hence assume that $x_j$ is one of the variables $x_1,\dots,x_{n-1}$. Then the clause $(x_1\wedge\dots\wedge x_{n-1}\rightarrow x_n)$ can be replaced with $(x_1\wedge\dots\wedge x_{j-1}\wedge x_{j+1}\wedge\dots\wedge x_{n-1}\rightarrow x_n)$, a contradiction to the minimality of $\varphi$.

Finally assume that $C$ is of the form $(x_1\wedge\dots\wedge x_{n-1}\rightarrow x_n)$, and $C'$ is a NAND-clause, let $C'=\mathtext{NAND}(x_{i_1},\dots,x_{i_k})$. First assume that the variables in $C'$ contain the variable $x_n$, without loss of generality assume that $i_1=n$. We prove that $C$ can be replaced by the clause $C''=(\overline{x_1}\vee\dots\vee\overline{x_{n-1}})$, which is a contradiction to the minimality of $\varphi$. Therefore, let $I$ be an assignment satisfying $C$ and $C'$, and indirectly assume that $I\nmodels C''$. Then $I(x_1)=\dots=I(x_{n-1})=1$. Since $I\models C$, it follows that $I(x_n)=1$, and hence $I$ does not satisfy $C'$, a contradiction. For the other direction, it is obvious that $C''$ implies $C$. Therefore it remains to consider the case that the variables in $C'$ do not contain the variable $x_n$. In this case it is obvious that $C'$ implies $C$, and hence $C$ can be removed from $\varphi$, a contradiction.

We therefore have proven that every element of $\Gamma$ of arity at least $3$ is an element of $\Gamma_{\mathtext{Horn}}$. In order to prove the theorem, assume that $\Gamma$ does not contain a relation of the required form. Then every relation in $\Gamma$ is either at most binary, or a NAND-relation of some arity. The only non-empty binary relations (up to permutation) over the Boolean domain which are Horn and irreducible are the following:

\begin{itemize}
 \item Conjunctions of literals,
 \item the full relation,
 \item implication, 
 \item $\overline{x}\wedge\top(y),x\wedge\top(y)$,
 \item equality,
 \item binary NAND.
\end{itemize}

(the exclusive-OR relation and the binary OR are not invariant under conjunction, and therefore not Horn). Therefore, $\Gamma$ can only contain NANDs and the relations in the list above, this implies that $\Gamma$ is IHSB$-$, a contradiction.
\end{proof}

The previous theorem shows that every constraint language that is Horn but not IHSB$-$ contains a clause which follows the same pattern as the clauses defining positive Horn. Hence it is not surprising that the proof of the main result of~\cite{BC94} can also be used to show the following:

\begin{theorem}\label{theorem:horn minimization is np hard for cnf}
 Let $\Gamma$ be an irreducible constraint language such that $\Gamma$ is Horn, but not IHSB$-$. Then $\MEE\Gamma$ is \NP-complete.
\end{theorem}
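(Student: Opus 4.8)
The plan is to establish \NP-completeness by proving membership and hardness separately. For membership in \NP, I would argue that given a $\Gamma$-formula $\varphi$ and integer $k$, a nondeterministic machine can guess a candidate $\Gamma$-formula $\psi$ with at most $k$ clauses (whose size is polynomially bounded in the input), and then verify $\psi\equiv\varphi$ in polynomial time. The key observation enabling the verification step is that $\Gamma$ is Horn, hence Schaefer, so by the result cited earlier in the excerpt (equivalence testing is in \PTIME\ for Schaefer languages, see~\cite{boherevo02}), the equivalence check can be performed efficiently. This gives $\MEE\Gamma\in\NP$.

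For \NP-hardness, the strategy is to leverage Theorem~\ref{theorem:horn languages contain real horn clause}: since $\Gamma$ is irreducible, Horn, but not IHSB$-$, it contains a relation $R$ expressible by $x_1\wedge\dots\wedge x_k\rightarrow y$ for some $k\ge 2$. This relation is precisely the building block of positive Horn formulas, for which \NP-completeness of minimization was established in~\cite{BC94}. The plan is therefore to adapt the reduction from~\cite{BC94} (which itself reduces from a problem such as a variant of vertex cover or set cover, in the spirit of the reduction sketched in Example~\ref{ex:galois does not work}) so that the positive Horn clauses used in the original hardness construction are replaced by clauses built from the relation $R\in\Gamma$. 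First I would recall the exact reduction of~\cite{BC94} that produces, from an instance of the source \NP-hard problem, a positive Horn formula together with a bound on the number of clauses. Then I would show how each clause $x_1\wedge\dots\wedge x_k\rightarrow y$ appearing in that construction can be simulated using the available relation $R$, being careful that the clause-count in the simulated formula corresponds correctly to the clause-count in the original, so that the minimization threshold transfers across the reduction.

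The main obstacle I anticipate is \emph{controlling the clause count precisely} under the translation from positive Horn clauses to $R$-clauses. Since we minimize the number of clauses, the reduction must guarantee that a small equivalent $\Gamma$-formula exists if and only if a small equivalent positive Horn formula exists; in particular, I must rule out the possibility that the richer vocabulary of $\Gamma$ (which may contain NAND relations and other Horn relations besides $R$) allows a strictly more economical representation that would break the correspondence. This is where irreducibility becomes essential: because $\Gamma$ is irreducible, its relations cannot be decomposed into conjunctions of lower-arity clauses, so an adversary cannot ``cheat'' by splitting or recombining clauses to beat the intended bound. I would thus need an argument, analogous to the irreducibility arguments in the \PTIME\ proofs, showing that any minimum $\Gamma$-formula equivalent to the constructed instance must use clauses in essentially the same pattern as the positive Horn target, so that the minimum clause count is preserved exactly. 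Establishing this rigidity of the minimum representation is the crux; the remaining bookkeeping of the reduction should then follow the template of~\cite{BC94} closely.
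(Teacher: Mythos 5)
Your membership argument matches the paper's. For hardness, your overall plan (invoke Theorem~\ref{theorem:horn languages contain real horn clause} to get $x_1\wedge\dots\wedge x_k\rightarrow y$ into $\Gamma$, then port the reduction of~\cite{BC94}) is the right one, but the step you yourself flag as the crux is left unresolved, and the resolution you propose is not the one that works. You want to prove a ``rigidity'' statement: that any minimum $\Gamma$-formula equivalent to the constructed instance must use clauses in essentially the same pattern as the positive Horn target. That is both stronger than needed and doubtful as stated -- nothing a priori prevents a small equivalent $\Gamma$-formula from using NAND-relations or other Horn relations of $\Gamma$ in some unforeseen combination, and you give no mechanism for excluding this.

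The paper closes this gap with a much softer observation, in two parts. First, the negative direction of the~\cite{BC94} reduction (from Hamiltonian path, not vertex cover) is a lower bound against \emph{arbitrary} DNFs with at most $m+2$ terms, not merely against pure-Horn ones; so one does not need the hypothetical small $\Gamma$-formula to look like the intended target, only to translate into \emph{some} DNF with the same number of terms. Second, the case analysis inside the proof of Theorem~\ref{theorem:horn languages contain real horn clause} shows that every relation of an irreducible Horn language is expressible as a single disjunction of literals, i.e., a single clause. Hence negating any $\Gamma$-formula with at most $m+2$ clauses yields a DNF with at most $m+2$ terms equivalent to $f(G)$, and the~\cite{BC94} lower bound applies directly. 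This is where irreducibility is actually used -- to guarantee the one-clause-per-relation correspondence that makes clause counts transfer under negation -- rather than to enforce any structural rigidity of minimum representations. Without this observation your reduction's backward direction does not go through, so as written the proposal has a genuine gap.
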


\begin{proof}
The problem is in \NP, since equivalence testing for Horn formulas can be performed in polynomial time~\cite{boherevo02}. \NP-hardness follows using techniques from~\cite{BC94}: One of the main results of that paper is that there exists a reduction $f$ from the well-known \NP-complete Hamiltonian path problem to a DNF minimization problem which has the following properties. We say that a formula is a \emph{pure-Horn-$3$-DNF} if is is a disjunction of clauses, and each clause is a conjunction of $2$ or $3$ literals, with exactly one negative literal.

\begin{enumerate}
 \item For each graph $G$ with $m$ edges, the formula $f(G)$ is a pure-Horn-$3$-DNF
 \item If $G$ has a Hamiltonian path, then there is a pure-Horn-$3$-DNF containing $m+2$ clauses equivalent to $f(G)$,
 \item If $G$ does not have a Hamiltonian path, then there is no DNF containing at most $m+2$ clauses equivalent to $f(G)$.
\end{enumerate}

We describe the obvious procedure to use this result as a proof of the hardness result for $\MEE\Gamma$. It is obvious that the negation of a pure-Horn-$3$-DNF formula $\varphi$ can be written as a $\Gamma$-formula $\mathtext{CNF}(\varphi)$, as the conjunction of the following clauses:

\begin{itemize}
 \item For a clause $x\wedge\overline{y}$ in $\varphi$, introduce a clause $(x\rightarrow y)$,
 \item For a clause $x\wedge y\wedge\overline{z}$ in $\varphi$, introduce a clause $(x\wedge y\rightarrow z)$.
\end{itemize}

These clauses can obviously be constructed using the relation $(x_1\wedge\dots\wedge x_k\rightarrow y)$ and variable identification. We claim that $G$ has a Hamiltonian path if and only if $(\mathtext{CNF}(f(G)),m+2)$ is a positive instance of $\MEE\Gamma$.

First assume that $G$ has a Hamiltonian path. Then, due to the above, $f(G)$ has an equivalent pure-Horn-$3$-DNF formula $\psi$ with at most $m+2$ clauses. Since $\psi$ is equivalent to $f(G)$, it follows that $\mathtext{CNF}(\psi)$ is equivalent to $\mathtext{CNF}(f(G))$, and also hat at most $m+2$-clauses. For the other direction, assume that there is a $\Gamma$-formula $\psi$ with at most $m+2$ clauses which is equivalent to $\mathtext{CNF}(f(G))$. From the case distinction in the proof of Theorem~\ref{theorem:horn languages contain real horn clause}, it is easy to see that every relation in $\Gamma$ can be written as a disjunction of literals. Therefore, the negation of $\psi$ is a DNF-formula with at most $m+2$ clauses which is equivalent to $f(G)$. From the above, it follows that $G$ has a Hamiltonian path, which completes the proof of the theorem.
\end{proof}

The NP-hardness result for constraint languages dealing with Horn logics now follows as a corollary:

\begin{corollary}
 Let $\Gamma$ be an irreducible constraint language that is Schaefer, not affine, not bijunctive, not IHSB$+$, and not IHSB$-$. Then $\MEE{\Gamma}$ is \NP-complete.
\end{corollary}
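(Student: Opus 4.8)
The plan is to derive this corollary purely by combining the two already-established hardness theorems with the Schaefer case analysis and the duality principle, so that no new construction is needed. First I would invoke the classification of Schaefer languages: since $\Gamma$ is Schaefer but neither affine nor bijunctive, $\Gamma$ must be Horn or dual Horn. This splits the argument into exactly two cases, and the hypotheses of the corollary have been tailored so that each case reduces to Theorem~\ref{theorem:horn minimization is np hard for cnf}.

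In the Horn case the result is immediate: $\Gamma$ is irreducible, Horn, and by hypothesis not IHSB$-$, so Theorem~\ref{theorem:horn minimization is np hard for cnf} directly gives that $\MEE\Gamma$ is \NP-complete. For the dual Horn case I would pass to $\overline\Gamma$. By definition $\overline\Gamma$ is Horn. The key bookkeeping is that the hypothesis ``$\Gamma$ is not IHSB$+$'' translates, via the definition ``$\Gamma$ is IHSB$-$ iff $\overline\Gamma$ is IHSB$+$,'' into ``$\overline\Gamma$ is not IHSB$-$'': indeed $\overline\Gamma$ being IHSB$-$ would force $\overline{\overline\Gamma}=\Gamma$ to be IHSB$+$, contradicting the hypothesis. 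I would also verify that $\overline\Gamma$ is irreducible. Then $\overline\Gamma$ satisfies the premises of Theorem~\ref{theorem:horn minimization is np hard for cnf}, so $\MEE{\overline\Gamma}$ is \NP-complete, and Proposition~\ref{prop:duality works} ($\MEE\Gamma\eqlogm\MEE{\overline\Gamma}$) transfers this back to $\MEE\Gamma$.

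The one genuine verification---the ``main obstacle,'' though it is mild---is that irreducibility is preserved under the dual operation. I would argue this directly: if $\overline\Gamma$ contained a reducible relation, witnessed by an equivalent conjunction $\bigwedge_i R_i$ all of whose clauses have arity below the relation's own arity, then complementing every coordinate turns this into an equivalent decomposition $\bigwedge_i \overline{R_i}$ of the corresponding relation of $\Gamma$, with clauses of the same (hence still too small) arities, contradicting irreducibility of $\Gamma$; the converse direction is symmetric. Membership in \NP\ in both cases follows exactly as in Theorem~\ref{theorem:horn minimization is np hard for cnf}, using polynomial-time equivalence testing for Horn (equivalently, Schaefer) formulas, so it only remains to assemble these pieces into the two-case argument above.
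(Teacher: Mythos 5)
Your proposal is correct and follows essentially the same route as the paper: the paper likewise observes that a Schaefer, non-affine, non-bijunctive language must be Horn or dual Horn, applies Theorem~\ref{theorem:horn minimization is np hard for cnf} in the Horn case, and invokes Proposition~\ref{prop:duality works} for the dual Horn case. Your explicit check that irreducibility is preserved under dualization is a detail the paper leaves implicit, and your argument for it is sound.
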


\begin{proof}
From the well-known classification of constraint languages with respect to their expressive power, it follows that $\Gamma$ is either Horn and not IHSB$-$, or dual Horn and not IHSB$+$. For the first case, the result follows from the theorems in this section, Proposition~\ref{prop:duality works} then implies the result for the dual Horn case.
\end{proof}

\subsection{Classification Theorem}

We can now state our main classification theorem---it follows from the results in the previous sections, and the fact that, by definition, a constraint language which is not affine, bijunctive, IHSB$+$, IHSB$-$, Horn, or dual Horn, is not Schaefer.

\begin{theorem}
  Let $\Gamma$ be an irreducible constraint language.
  \begin{enumerate}
    \item If $\Gamma$ is affine, bijunctive, IHSB$+$, or IHSB$-$, then $\MEE\Gamma\in\PTIME$.
    \item Otherwise, if $\Gamma$ is Horn or dual Horn, then $\MEE{\Gamma}$ is \NP-complete,
    \item Otherwise, $\Gamma$ is not Schaefer, and $\MEE{\Gamma^+}$ is \CONP-hard.
  \end{enumerate}
\end{theorem}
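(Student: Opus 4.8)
The plan is to treat this statement as a pure assembly of the classification results established in the preceding subsections, organized as the three-way case distinction that the statement already suggests. Since the three items are phrased with successive ``otherwise'' clauses, I would first note that the cases are jointly exhaustive, and then verify that in each case the hypotheses of exactly one of the earlier results are met. No new construction is needed; the entire content lies in routing each irreducible $\Gamma$ to the correct prior theorem.

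For the first item, I would dispatch to the four polynomial-time results according to which property $\Gamma$ enjoys: Theorem~\ref{theorem:affine minimization ptime result} for the affine case, the polynomial-time theorem for irreducible bijunctive languages, and Corollary~\ref{corollary:ihsb} for the IHSB$+$ and IHSB$-$ cases. These sufficient conditions can overlap (a language may be, e.g., both bijunctive and Horn), but since all four verdicts agree that $\MEE\Gamma\in\PTIME$, any overlap is harmless: it suffices that \emph{at least one} applicable result yields membership in \PTIME.

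For the second item, the leading ``otherwise'' guarantees that $\Gamma$ is none of affine, bijunctive, IHSB$+$, or IHSB$-$, while the hypothesis of this item adds that $\Gamma$ is Horn or dual Horn. Because Horn (respectively, dual Horn) implies Schaefer, $\Gamma$ is then irreducible, Schaefer, and simultaneously not affine, not bijunctive, not IHSB$+$, and not IHSB$-$---precisely the hypothesis of the \NP-completeness corollary of the previous subsection, which already folds the dual case in via Proposition~\ref{prop:duality works}. Invoking it gives \NP-completeness. For the third item, the two preceding ``otherwise'' filters leave a $\Gamma$ that is none of affine, bijunctive, IHSB$+$, IHSB$-$, Horn, or dual Horn; by the remark recorded just before the statement, such a $\Gamma$ fails to be Schaefer, so the \CONP-hardness corollary applies to $\MEE{\Gamma^+}$.

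The work here is entirely bookkeeping, so I do not expect a genuine mathematical obstacle. The one point I would state explicitly---and the only place a careless argument could go wrong---is \emph{why the overlapping sufficient conditions in the first item do not conflict with the later items}: the nested ``otherwise'' structure routes every language to the first item whose conditions it satisfies, and since the positive (\PTIME) verdict is the strongest, no language can be simultaneously forced into \PTIME\ and into one of the hardness cases. Confirming exhaustiveness (via the Schaefer definition) and this harmlessness of overlaps is the substance of the proof.
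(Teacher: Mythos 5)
Your proposal is correct and matches the paper's own argument, which likewise derives the theorem purely by assembling the preceding polynomial-time theorems, the \NP-completeness corollary for irreducible Schaefer languages that are not affine, bijunctive, IHSB$+$, or IHSB$-$, and the \CONP-hardness corollary for non-Schaefer languages, using the definition of Schaefer to establish exhaustiveness. Your added remark on why the overlapping sufficient conditions in the first item are harmless is sound bookkeeping that the paper leaves implicit.
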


While the theorem does not completely classify the complexity of the MEE problem for all irreducible constraint languages, we consider it unlikely that there exist more polynomial-time cases than the ones we discovered: 
To the best of our knowledge, no decision problem for non-Schaefer constraint languages has been proven to be in polynomial time except for trivial cases (satisfiability of $\Gamma$-formulas can be tested in polynomial time if every relation from $\Gamma$ contains the all-$0$ or all-$1$-tuple). Also, for these languages $\Gamma$, already testing equivalence of formulas is \CONP-hard. This implies that, unless $\PTIME=\NP$, there cannot be a polynomial-time algorithm that, given a $\Gamma$-formula, computes its ``canonical'' (i.e., up to differences checkable by a polynomial-time algorithm) minimum equivalent expression (this would immediately solve the equivalence problem in polynomial time). We are therefore confident that our classification covers all polynomial-time cases for irreducible constraint languages.

It is worth noting that the prerequisite that $\Gamma$ is irreducible is certainly required for the polynomial-time cases, as the earlier example highlighted. For the hardness results, this is less clear---the \CONP-hardness does not rely on this prerequisite at all, and for the \NP-complete Horn cases, we consider it unlikely that there is a constraint language with the same expressive power that does not directly encode positive Horn.

\section{Conclusion and Open Questions}

We have studied the complexity of the minimization problem for restricted classes of propositional formulas in two settings, obtained a complete characterization of all tractable cases in the Post case, and a large class of tractable cases in the constraint case. 

Open questions include the exact classification of the \CONP-hard cases. It is likely that most of them are \NP-hard as well. It would be very interesting to determine whether some of these are actually $\SigPH 2$-complete (this does not follow directly from the $\SigPH 2$-completeness of the minimization problem for CNF formulas~\cite{uma01}, since our constraint languages $\Gamma$ and bases $B$ are finite).

Finally, it would be very interesting to understand how non-irreducibility influences the complexity.

\section*{Acknowledgment}

We thank the anonymous reviewers of~\cite{Hemaspaandra-Schnoor-Minimization-IJCAI-2011-TOAPPEAR} for many helpful comments, in particular for pointing out an issue with the definition of irreducibility.

\bibliographystyle{alpha}
\bibliography{minimization}

\newcommand{\etalchar}[1]{$^{#1}$}
\begin{thebibliography}{ACG{\etalchar{+}}06}

\bibitem[ABI{\etalchar{+}}09]{abi+09}
E.~Allender, M.~Bauland, N.~Immerman, H.~Schnoor, and H.~Vollmer.
\newblock The complexity of satisfiability problems: Refining {S}chaefer's
  {T}heorem.
\newblock {\em Journal of Computer and System Sciences}, 75(4):245--254, 2009.

\bibitem[ACG{\etalchar{+}}06]{DBLP:journals/jair/AdjimanCGRS06}
P.~Adjiman, P.~Chatalic, F.~Goasdou{\'e}, M.-C. Rousset, and L.~Simon.
\newblock Distributed reasoning in a peer-to-peer setting: Application to the
  semantic web.
\newblock {\em JAIR}, 25:269--314, 2006.

\bibitem[AGU72]{agu72:trans-reduction}
A.~Aho, M.~Garey, and J.~Ullman.
\newblock The transitive reduction of a directed graph.
\newblock {\em SIAM Journal on Computing}, 2(1):131--137, 1972.

\bibitem[BCRV03]{bcrv03}
E.~B\"ohler, N.~Creignou, S.~Reith, and H.~Vollmer.
\newblock Playing with {B}oolean blocks, part {I}: Post's lattice with
  applications to complexity theory.
\newblock {\em SIGACT News}, 34(4):38--52, 2003.

\bibitem[BHRV02]{boherevo02}
E.~B\"ohler, E.~Hemaspaandra, S.~Reith, and H.~Vollmer.
\newblock Equivalence and isomorphism for {B}oolean constraint satisfaction.
\newblock In {\em Proc. CSL}, volume 2471 of {\em LNCS}, pages 412--426.
  Springer Verlag, 2002.

\bibitem[BHRV04]{boherevo04}
E.~B\"ohler, E.~Hemaspaandra, S.~Reith, and H.~Vollmer.
\newblock The complexity of {B}oolean constraint isomorphism.
\newblock In {\em Proc. STACS}, volume 2996 of {\em LNCS}, pages 164--175.
  Springer Verlag, 2004.

\bibitem[Bit08]{DBLP:journals/logcom/Bittencourt08}
G.~Bittencourt.
\newblock Combining syntax and semantics through prime form representation.
\newblock {\em J. Log. Comput.}, 18(1):13--33, 2008.

\bibitem[BU11]{bu-um-11:minimization}
D.~Buchfuhrer and C.~Umans.
\newblock The complexity of boolean formula minimization.
\newblock {\em Journal of Computer and Systems Sciences}, 77(1):142--153, 2011.

\bibitem[Bul06]{DBLP:journals/jacm/Bulatov06}
A.~Bulatov.
\newblock A dichotomy theorem for constraint satisfaction problems on a
  3-element set.
\newblock {\em J. ACM}, 53(1):66--120, 2006.

\bibitem[Bus87]{bus87}
S.~R. Buss.
\newblock The {B}oolean formula value problem is in {ALOGTIME}.
\newblock In {\em Proceedings 19th Symposium on Theory of Computing}, pages
  123--131. ACM Press, 1987.

\bibitem[Bv94]{BC94}
E.~Boros and O.~\v{C}epek.
\newblock On the complexity of {H}orn minimization.
\newblock Technical Report 1-94, RUTCOR Research Report~RRR, Rutgers
  University, New Brunswick, NJ, January 1994.

\bibitem[BV08]{DBLP:conf/dagstuhl/BulatovV08}
A.~Bulatov and M.~Valeriote.
\newblock Recent results on the algebraic approach to the csp.
\newblock In N.~Creignou, P.~Kolaitis, and H.~Vollmer, editors, {\em Complexity
  of Constraints}, volume 5250 of {\em LNCS}, pages 68--92. Springer, 2008.

\bibitem[CH97]{crhe97}
N.~Creignou and J.-J. H\'ebrard.
\newblock On generating all solutions of generalized satisfiability problems.
\newblock {\em Informatique Th\'eorique et Applications/Theoretical Informatics
  and Applications}, 31(6):499--511, 1997.

\bibitem[Cha04]{chang}
T.~Chang.
\newblock Horn formula minimization.
\newblock Master's thesis, Rochester Institute of Technology, 2004.

\bibitem[CKS01]{crkhsu01}
N.~Creignou, S.~Khanna, and M.~Sudan.
\newblock {\em Complexity {C}lassifications of {B}oolean {C}onstraint
  {S}atisfaction {P}roblems}.
\newblock Monographs on Discrete Applied Mathematics. SIAM, 2001.

\bibitem[CKZ07]{crkoza07:plain-bases}
N.~Creignou, P.~Kolaitis, and B.~Zanuttini.
\newblock Structure identification of {B}oolean relations and plain bases for
  co-clones.
\newblock {\em Journal of Computer and System Sciences}, 2007.

\bibitem[Cur84]{curtis84:linearalgebrabook}
C.W. Curtis.
\newblock {\em Linear algebra. An introductory approach}.
\newblock Springer-Verlag, 1984.

\bibitem[CV08]{crvo08}
N.~Creignou and H.~Vollmer.
\newblock Boolean constraint satisfaction problems: {W}hen does {P}ost's
  lattice help?
\newblock In N.~Creignou, P.~Kolaitis, and H.~Vollmer, editors, {\em Complexity
  of Constraints}, pages 3--37. Springer Verlag, Berlin Heidelberg, 2008.

\bibitem[HK95]{Hammer-Kogan-Horn-Minimization}
P.~Hammer and A.~Kogan.
\newblock Quasi-acyclic propositional {H}orn knowledge bases: Optimal
  compression.
\newblock {\em IEEE Trans. Knowl. Data Eng.}, 7(5):751--762, 1995.

\bibitem[HS11]{Hemaspaandra-Schnoor-Minimization-IJCAI-2011-TOAPPEAR}
E.~Hemaspaandra and H.~Schnoor.
\newblock Minimization for generalized {B}oolean formulas.
\newblock In {\em Proceedings of IJCAI 2011}, 2011.
\newblock To appear.

\bibitem[HSS10]{HemaspaandraSchnoorSchnoor-GeneralizedModal-JCSS-2010}
E.~Hemaspaandra, H.~Schnoor, and I.~Schnoor.
\newblock Generalized modal satisfiability.
\newblock {\em Journal of Computer and System Sciences}, 76(7):561---578, 2010.

\bibitem[HW02]{hewe02}
E.~Hemaspaandra and G.~Wechsung.
\newblock The minimization problem for {B}oolean formulas.
\newblock {\em SIAM J. Comput.}, 31(6):1948--1958, 2002.

\bibitem[JCG97]{jecogy97}
P.~Jeavons, D.~Cohen, and M.~Gyssens.
\newblock Closure properties of constraints.
\newblock {\em Journal of the ACM}, 44(4):527--548, 1997.

\bibitem[Jub99]{ju99}
L.~Juban.
\newblock Dichotomy theorem for generalized unique satisfiability problem.
\newblock In {\em Proc. FCT}, volume 1684 of {\em LNCS}, pages 327--337.
  Springer Verlag, 1999.

\bibitem[Lad75]{lad75b}
R.~Ladner.
\newblock On the structure of polynomial-time reducibility.
\newblock {\em Journal of the ACM}, 22:155--171, 1975.

\bibitem[Lew79]{lew79}
H.~Lewis.
\newblock Satisfiability problems for propositional calculi.
\newblock {\em Math. Systems Theory}, 13:45--53, 1979.

\bibitem[MS72]{mest72}
A.~Meyer and L.~Stockmeyer.
\newblock The equivalence problem for regular expressions with squaring
  requires exponential time.
\newblock In {\em Proceedings 13th Symposium on Switching and Automata Theory},
  pages 125--129. IEEE Computer Society Press, 1972.

\bibitem[NJ04]{nordh-johnsson:2004}
G.~Nordh and P.~Jonsson.
\newblock An algebraic approach to the complexity of propositional
  circumscription.
\newblock In {\em Proc. LICS}, pages 367--376, 2004.

\bibitem[Pos41]{pos41}
E.~Post.
\newblock The two-valued iterative systems of mathematical logic.
\newblock {\em Annals of Mathematical Studies}, 5:1--122, 1941.

\bibitem[Qui52]{Quine-52}
W.~V. Quine.
\newblock The problem of simplifying truth functions.
\newblock {\em The American Mathematical Monthly}, 59(8):521--531, 1952.

\bibitem[Rei01]{rei01}
S.~Reith.
\newblock {\em Generalized Satisfiability Problems}.
\newblock PhD thesis, Fachbereich Mathematik und Informatik, Universit\"at
  W\"urzburg, 2001.

\bibitem[Sch78]{sch78}
T.~Schaefer.
\newblock The complexity of satisfiability problems.
\newblock In {\em Proceedings 10th Symposium on Theory of Computing}, pages
  216--226. ACM Press, 1978.

\bibitem[SS08]{SchnoorSchnoor-Galois-LNCS-2008}
H.~Schnoor and I.~Schnoor.
\newblock Partial polymorphisms and constraint satisfaction problems.
\newblock In N.~Creignou, P.~Kolaitis, and H.~Vollmer, editors, {\em Complexity
  of Constraints}, volume 5250 of {\em LNCS}, pages 229--254. Springer, 2008.

\bibitem[TV10]{DBLP:journals/corr/abs-1009-1990}
M.~Thomas and H.~Vollmer.
\newblock Complexity of non-monotonic logics.
\newblock {\em CoRR}, abs/1009.1990, 2010.

\bibitem[Uma01]{uma01}
C.~Umans.
\newblock The minimum equivalent {DNF} problem and shortest implicants.
\newblock {\em Journal of Computer and Systems Sciences}, 63(4):597--611, 2001.

\bibitem[UVSV06]{umans06}
C.~Umans, T.~Villa, and A.~Sangiovanni-Vincentelli.
\newblock Complexity of two-level logic minimization.
\newblock {\em IEEE Trans. Computer-Aided Design of Int. Circuits and Systems},
  25(1):1230--1246, 2006.

\end{thebibliography}

\end{document}